\definecolor{red}{RGB}{255,0,0}
\definecolor{blue}{RGB}{0,0,255}
\newtheorem{theorem}{Theorem}
\newtheorem{lemma}[theorem]{Lemma}
\newtheorem{claim}{Claim}
\newtheorem{corollary}[theorem]{Corollary}
\newcommand{\utype}{%
	\begin{tikzpicture}[baseline=0pt, scale=0.07]
	\node[fill, circle, inner sep= 1.4pt, outer sep=0pt] (a) at (0,0) {} ;
	\node[inner sep = 0pt] (c) at (0,4) {} ;
	\draw[->, thick] (a) -- (c) ;
	\node[inner sep = 0 pt] at (1.2, 0) {} ;
	\end{tikzpicture}}
\newcommand{\dtype}{%
	\begin{tikzpicture}[baseline=-1ex, scale=0.07]
	\node[fill, circle, inner sep= 1.4pt, outer sep=0pt] (a) at (0,0) {} ;
	\node[inner sep = 0pt] (c) at (0,-4) {} ;
	\draw[->, thick] (a) -- (c) ;
	\node[inner sep = 0 pt] at (1.2, 0) {} ;
	\end{tikzpicture}}
\newcommand{\etype}{%
	\begin{tikzpicture}[baseline=-0.5ex, scale=0.07]
	\node[fill, circle, inner sep= 1.4pt, outer sep=0pt] (a) at (0,0) {} ;
	\node[inner sep = 0 pt] at (1.2, 0) {} ;
	\end{tikzpicture}}
\title{Computing maximum cliques in $B_2$-EPG graphs}
\author{Nicolas Bousquet\thanks{Supported by ANR Projects STINT
		(\textsc{anr-13-bs02-0007}) and LabEx PERSYVAL-Lab
		(\textsc{anr-11-labx-0025-01}). } 
	\authorcr G-SCOP (CNRS, Univ. Grenoble-Alpes), Grenoble, France. 
	\and Marc Heinrich\thanks{Supported by the \textsc{anr-14-ce25-0006} project of the French National Research Agency}
	\authorcr LIRIS (Université Lyon 1, CNRS), Lyon, France, UMR5205. }
\begin{document}

\maketitle

\begin{abstract}
EPG graphs, introduced by Golumbic et al. in 2009, are edge-intersection graphs of paths on an orthogonal grid. The class $B_k$-EPG is the subclass of EPG graphs where the path on the grid associated to each vertex has at most $k$ bends.
Epstein et al. showed in 2013 that computing a maximum clique in $B_1$-EPG graphs is polynomial. As remarked in~[Heldt et al., 2014], when the number of bends is at least $4$, the class contains $2$-interval graphs for which computing a maximum clique is an NP-hard problem. The complexity status of the Maximum Clique problem remains open for $B_2$ and $B_3$-EPG graphs.
In this paper, we show that we can compute a maximum clique in polynomial time in $B_2$-EPG graphs given a representation of the graph.

Moreover, we show that a simple counting argument provides a ${2(k+1)}$-approximation for the coloring problem on $B_k$-EPG graphs without knowing the representation of the graph. It generalizes a result of~[Epstein et al, 2013] on $B_1$-EPG graphs (where the representation was needed).
\end{abstract}

\section{Introduction}

An \emph{Edge-intersection graph of Paths on a Grid} (or \emph{EPG graph} for short) is a graph where vertices can be represented as paths on an orthogonal grid, and where there is an edge between two vertices if their respective paths share at least one edge. A turn on a path is called a \emph{bend}. EPG graphs were introduced by  Golumbic, Lipshteyn and Stern in~\cite{GolumbicLS09}. They showed that every graph can be represented as an EPG graph. The number of bends on the representation of each vertex was later improved in~\cite{HeldtKU14}.
EPG graphs have been introduced in the context of circuit layout, which can be modeled as paths on a grid. EPG graphs are related to the knock-knee layout model where two wires may either cross on a grid point or bend at a common point, but are not allowed to share an edge of the grid. 
In~\cite{GolumbicLS09}, the authors introduced a restriction on the number of bends on the path representing each vertex. The \emph{class $B_k$-EPG} is the subclass of EPG graphs where the path representing each vertex has at most $k$ bends. Interval graphs (intersection graphs of intervals on the line) are $B_0$-EPG graphs.
The class of trees is in $B_1$-EPG~\cite{GolumbicLS09}, outerplanar graphs are in $B_2$-EPG~\cite{HeldtKU142} and planar graphs are in $B_4$-EPG~\cite{HeldtKU142}. Several papers are devoted to prove structural and algorithmic properties of EPG-graphs with a small number of bends, see for instance~\cite{AlconBDGmRV16,AsinowskiR12,CohenGR14,FrancisL16}.

While recognizing and finding a representation of a graph in $B_0$-EPG (interval graph) can be done in polynomial time, it is NP-complete to decide if a graph belongs to $B_1$-EPG~\cite{CameronCH16} or to $B_2$-EPG~\cite{PergelR16}. The complexity status remains open for more bends. Consequently, in all our results we will mention whether the representation of the graph is needed or not.

Epstein et al.~\cite{EpsteinGM13} showed that the $k$-coloring problem and the $k$-independent set problem are NP-complete restricted to $B_1$-EPG graphs even if the representation of the graph is provided.
Moreover they gave $4$-approximation algorithms for both problems when the representation of the graph is given.
Bessy et al.~\cite{BougeretBGP15} proved that this there is no PTAS for the $k$-independent set problem on $B_1$-EPG graphs and that the problem is $W[1]$-hard on $B_2$-EPG graphs (parameterized by $k$). Recently, Bonomo et al.~\cite{BonomoPS16} showed that every $B_1$-EPG graph admits a $4$-clique coloring and provides a linear time algorithm that finds it, given the representation of the graph.

\paragraph{Maximum Clique problem on EPG graphs.}

\begin{figure}
	\centering
	\begin{tikzpicture}
		\foreach \x in {0,1,2} {
			\draw[line width=1pt] (0,0.1*\x) -| ++(1+0.5*\x,2) -- ++(3,0);
			\draw[line width=1pt] (1.1+0.5*\x,0.1*\x) -| ++(3,2) -- ++(2-0.5*\x,0);
		}
	\end{tikzpicture}
	
	\caption{A complete graph $K_{6}$ minus a matching.}
	\label{fig:bipcomp}
\end{figure}
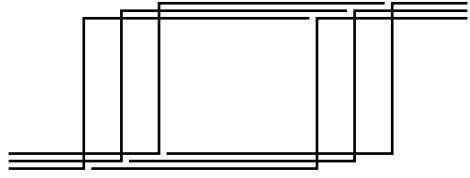

A \emph{claw} of the grid is a set of three edges of the grid incident to the same point. Golumbic et al proved in~\cite{GolumbicLS09} that a maximum clique in a $B_1$-EPG graph can be computed in polynomial time if the representation of the graph is given.
This algorithm is based on the fact that, for every clique $X$ of a $B_1$-EPG graph, either there exists an edge $e$ of the grid such that all the vertices of $X$ contain $e$, or there exists a claw $T$ such that all the vertices of $X$ contain at least two of the three edges of $T$. In particular, it implies that the number of maximal cliques in $B_1$-EPG graphs is polynomial. Epstein et al.~\cite{EpsteinGM13} remarked that the representation of the graph is not needed since the neighborhood of every vertex is a weakly chordal graph.
When the number of bends is at least $2$, such a proof scheme cannot hold since there might be an exponential number of maximal cliques. Indeed, one can construct a complete graph minus a matching in $B_2$-EPG (see Figure~\ref{fig:bipcomp}) which has $2^{n/2}$ maximal cliques. So to compute a maximum clique on $B_{k}$-EPG graphs for $k \geq 2$, a new proof technique has to be introduced.

EPG graphs are closely related to two well known classes of intersection graphs, namely $k$-interval graphs, and $k$-track interval graphs on which the maximum clique problem have been widely studied. A \emph{$k$-interval} is the union of $k$ distinct intervals in the real line. A \emph{$k$-interval graph}, introduced in~\cite{TrotterH79}, is the intersection graph of $k$-intervals. A \emph{$k$-track interval} is the union of $k$ intervals on $k$-distinct lines (called tracks). A \emph{$k$-track interval graph} is an intersection graph of $k$-track intervals (in other words, it is the edge union of $k$ interval graphs on distinct lines). One can easily check, as first observed in~\cite{HeldtKU14}, that $B_{3k-3}$-EPG graphs contain $k$-track interval graphs and $B_{4k-4}$-EPG graphs contain $k$-interval graphs.

Since computing a maximum clique in a $2$-interval graph is NP-hard~\cite{FrancisGO2015}, the Maximum Clique Problem is NP-hard on $B_4$-EPG graphs. So the complexity status of the Maximum Clique problem remains open on $B_k$-EPG graphs for $k=2$ and $3$.
In this paper, we prove that the Maximum Clique problem can be decided in polynomial time on $B_2$-EPG graphs when the representation of the graph is given.
The proof scheme of~\cite{GolumbicLS09} cannot be extended to $B_2$-EPG graphs. Indeed, there cannot exist a bijection between local structures, like claws, and maximal cliques since there are examples with an exponential number of different maximum cliques. Our proof is divided into two main lemmas. The first one ensures that we can separate so-called Z-vertices (vertices that use paths on two rows) from U-vertices (vertices that use edges of two columns). The second ensures that if a graph only contains Z-vertices, then all the maximal cliques are included in a polynomial number of subgraphs; subgraphs for which a maximum clique can be computed in polynomial time.

\paragraph{Coloring $B_k$-EPG graphs.}
Gyarf\'as proved in~\cite{Gyarfas85} that the chromatic number of $k$-interval graphs is bounded by a function of the maximum clique using the degeneracy of the graph. In Section~\ref{sec:coloring}, we propose a slightly different proof than the one of Gyarf\'as to prove the degeneracy of $B_k$-EPG graphs. This bounds ensures that there is a polynomial time algorithm that colors the graph with $2(k+1) \cdot \chi(G)$ colors in polynomial time without knowing the representation of the graph, where $\chi(G)$ is the chromatic number of $G$. In particular, it provides a simple coloring algorithm using at most $4$ times the optimal number of colors on $B_1$-EPG graphs without knowing its representation. It improves the algorithm of~\cite{EpsteinGM13} where the representation was needed.

A class of graphs $\mathcal{C}$ is \emph{$\chi$-bounded} if there exists a function $f$ such that $\chi(G) \leq f(\omega(G))$ for every graph $G$ of $\mathcal{C}$ with $\omega(G)$ the size of a maximum clique in $G$. Combinatorial bounds on the chromatic number of generalizations of interval graphs received a considerable attention  initiated in~\cite{GyarfasL85}. The bound on the degeneracy of the graph ensures that the class $B_k$-EPG is $\chi$-bounded and
${\chi(G) \leq 2(k+1) \cdot \omega(G)}$.
As a by-product, it also ensures that graphs in $B_k$-EPG contain either a clique or a stable set of size $\sqrt{\frac{n}{2(k+1)}}$ which improves a result of~\cite{AsinowskiR12} in which they show that every $B_1$-EPG graph admits a clique or a stable set of size $\mathcal{O}(n^{1/3})$.

\section{Preliminaries}

Let $a,b$ be two real values with $a \leq b$. The \emph{(closed) interval} $[a,b]$ is the set of points between $a$ and $b$ containing both $a$ and $b$. The interval that does not contain $b$ is represented by $[a,b)$ and the one that does not contain $a$ by $(a,b]$.

An \emph{interval graph} is an intersection graph of intervals in the line. More formally, vertices are intervals and two vertices are incident if their respective intervals intersect. Let $H$ be an interval graph with its representation. Let $u \in V(H)$. The \emph{left extremity of $u$} is the leftmost point $p$ of the line such that $u$ contains $p$. The \emph{right extremity of $u$} is the rightmost point $p$ of the line such that $u$ contains $p$.

Let $G$ be an EPG graph with its representation on the grid.
In what follows, we will always denote by roman letters $a,b,\ldots$ the rows of the grid and by greek letters $\alpha,\beta,\ldots$ the columns of the grids. Given a row $a$ (resp. column $\alpha$) of the grid, the row $a-1$ (resp. $\alpha-1$) denotes the row under $a$ (resp. at the left of $\alpha$). Given a row $a$ and a column $\alpha$, we will denote by $(a, \alpha)$ the grid point at the intersection of $a$ and $\alpha$. By abuse of notation, we will also denote by $\alpha$ (for a given row $a$) the point at the intersection of row $a$ and column $\alpha$.
Let $u \in V(G)$. We denote by $P_u$ the path representing $u$ on the grid. The vertex $u$ of $G$ \emph{intersects} a row (resp. column) if $P_u$ contains at least one edge of it.

\section{Typed intervals and projection graphs}
\label{sec:typed}

Let $G$ be a $B_2$-EPG graph with its representation on the grid. Free to slightly modify the representation of $G$, we can assume that the path associated to every vertex has exactly $2$ bends. Indeed, if there is a vertex $u$ such that $P_u$ has less than two bends, let $(a,\alpha)$ be one of the two extremities of $P_u$. Up to a rotation of the grid, we can assume that the unique edge of $P_u$ incident to $(a, \alpha)$ is the horizontal edge $e$ between $(a,\alpha)$ and $(a,\alpha+1)$. Then create a new column $\beta$ between $\alpha$ and $\alpha+1$, and replace the edge $e$ by two edges, one between $\alpha$ and $\beta$ on row $a$, and another one going up at $\beta$. 
This transformation does not modify the graph $G$. So we will assume in the following that for every vertex $u$, the path $P_u$ has exactly two bends.

A \emph{Z-vertex} of $G$ is a vertex that intersects two rows and one column. A \emph{U-vertex} is a vertex that intersects one row and two columns. The \emph{index} of a vertex $u$ is the set of rows intersected by $u$. The vertex $u$ \emph{contains $a$ in its index} if $a$ is in the index of $u$. 

Let $u$ be a vertex containing $a$ in its index. The \emph{extremities} of $u$ on $a$ are the points of the row $a$ on which $P_u$ stops or bends. Since $P_u$ has at most two bends, $P_u$ has exactly two extremities on $a$ and the subpath of $P_u$ on row $a$ is the interval of $a$ between these two extremities. 
The \emph{$a$-interval} of $u$, denoted by $P_u^a$, is the interval between the two extremities of $u$ on $a$. Note that since the index of $u$ contains $a$, $P_u^a$ contains at least one edge.
Let $\alpha \leq \beta$ be two points of $a$. 
The path $P_u^a$ \emph{intersects non-trivially} $[\alpha,\beta]$ if $P_u^a \cap [\alpha,\beta]$ is not empty or reduced to a single point. The path $P_u^a$ \emph{weakly contains $[\alpha,\beta]$} if $[\alpha,\beta]$ is contained in $P_u^a$. 

\subsection{Typed intervals}

Knowing that $P_u^a= [\alpha,\beta]$ is not enough to understand the structure of $P_u$. Indeed whether $P_u$ stops at $\alpha$, or bends (upwards or downwards) at $\alpha$, affects the neighborhood of $u$ in the graph. To catch this difference we introduce typed intervals that contain information on the ``possible bends'' on the extremities of the interval.

We define three \emph{types} namely the \emph{empty type} $\etype$, the \emph{d-type} $\dtype$ and the \emph{u-type} $\utype$. A \emph{typed point} (on row $a$) is a pair $x,\alpha$  where $x$ is a type and $\alpha$ is the point at the intersection of row $a$ and column $\alpha$.
A \emph{typed interval} (for row $a$) is a pair of typed points $(x,\alpha)$ and $(y,\beta)$ (on row $a$) with $\alpha \leq \beta$ denoted by $[x\alpha, y \beta]$.
Informally, a typed interval is an interval $[\alpha, \beta]$ and indications on the structure of the bends on the extremities. A typed interval $t$ is \emph{proper} if $\alpha \neq \beta$ or if $\alpha=\beta$, $x=y$, and $x \in \{ \utype, \dtype \}$.

 Let $t=[x\alpha,y\beta]$ and $t' = [x' \alpha', y' \beta']$ be two typed intervals on a row $a$. Denote by $z\gamma$ one of the endpoints of $t'$. We say that $t$ is \emph{coherent} with the endpoint $z\gamma$ of $t'$ if one of the following holds: (i) $\gamma$ is included in the open interval $(\alpha, \beta)$, or (ii) $z = \etype$, and $[\alpha, \beta]$ contains the edge of $[\alpha', \beta']$ adjacent to $\gamma$, or (iii) $z \neq \etype$, and $z \gamma \in \{ x\alpha, y \beta\}$.
 We can remark in particular that if $t$ is coherent with an endpoint $z\gamma$, then $\gamma$ is in the closed interval $[\alpha, \beta]$.
 The typed interval $t$ \emph{contains} $t'$ if $[\alpha',\beta'] \subset [\alpha, \beta]$, and $t$ is coherent with both endpoints of $t'$.
 The typed interval $t$ \emph{intersects} $t'$ if $[\alpha,\beta]$ intersects non trivially $[\alpha', \beta']$ (i.e. the intersection contains at least one grid-edge), or $t$ is coherent with an endpoint of $t'$, or $t'$ is coherent with an endpoint of $t$. Note that, if $t'$ contains $t$ then in particular it intersects $t$.

Let $u$ be a vertex containing $a$ in its index. The \emph{t-projection of $u$ on $a$} is the typed interval $[x\alpha,y\beta]$ where $\alpha,\beta$ are the extremities of $P_u^a$ and the type of an extremity $\gamma \in \{ \alpha,\beta\}$ is $\etype$ if $P_u$ stops at $\gamma$ and $\dtype$ (resp. $\utype$) if $P_u^a$ bends downwards (resp. upwards) at $\gamma$. Note that this typed interval is proper since it contains at least one edge.
The path $P_u$ \emph{contains} (resp. \emph{intersects}) a typed interval $t$ (on $a$) if the t-projection of $u$ on $a$ contains $t$ (resp. intersects $t$). By abuse of notation we say that $u$ contains or intersects $t$. Note that by definition, the path $P_u$ contains the t-projection of $u$ on $a$.
Moreover if a vertex $u$ contains a typed interval $t = [x\alpha, y\beta]$, then the path $P_u$ weakly contains $[\alpha, \beta]$. If $u$ intersects $t$, then the path $P_u$ intersects the segment $[\alpha, \beta]$ (possibly on a single point).

\begin{figure}
	\centering
	\includegraphics[width=.5\textwidth]{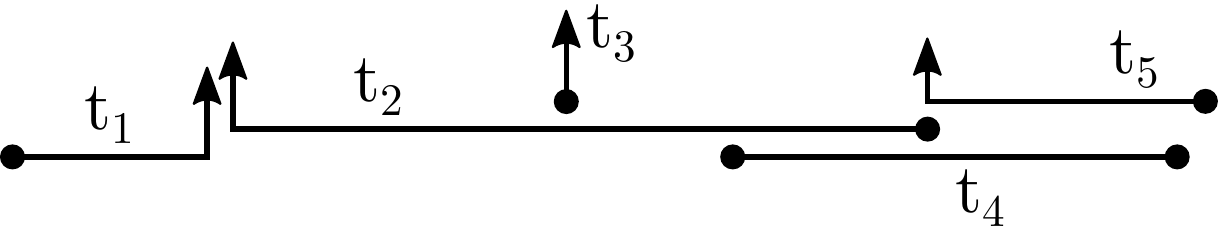}
	\caption{\label{fig:exampleIntersection} Examples of typed intervals on the same row. In this example, the interval $t_3$ is reduced to a single point. The interval $t_2$ is coherent with the right extremity of $t_1$, the extremities of $t_3$ and the left extremity of $t_4$. It is not coherent with the extremities of $t_5$. Moreover, $t_2$ intersects $t_1$, $t_3$ and $t_4$ but not $t_5$. And $t_2$ contains $t_3$.}
\end{figure}

The following simple lemma motivates the introduction of typed intervals.

\begin{lemma}
\label{lem:intersect}
Let $G$ be a $B_2$-EPG graph, let $u, v$ be two vertices whose index contain $a$, and $t$ be a proper typed interval of $a$. If $u$ contains $t$ and $v$ intersects $t$, then $u$ and $v$ are adjacent.
\end{lemma}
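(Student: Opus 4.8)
The plan is to use the defining property of EPG graphs: $u$ and $v$ are adjacent exactly when $P_u$ and $P_v$ share a grid edge. Since $u$ contains $t=[x\alpha,y\beta]$, the observations preceding the lemma give that $P_u$ weakly contains $[\alpha,\beta]$, so every horizontal edge of row $a$ lying in $[\alpha,\beta]$ belongs to $P_u$. Since $v$ intersects $t$, the same observations give that $P_v$ meets the segment $[\alpha,\beta]$, but only possibly on a single point, so the whole difficulty is to upgrade this shared point into a shared \emph{edge}. I would therefore set $t_u = [x_u\alpha_u, y_u\beta_u]$ and $t_v=[x_v\alpha_v,y_v\beta_v]$ for the t-projections of $u,v$ on $a$ (both proper, hence with $\alpha_u<\beta_u$ and $\alpha_v<\beta_v$), record that $[\alpha,\beta]\subseteq[\alpha_u,\beta_u]$ with $t_u$ coherent with both endpoints of $t$, and run a case analysis on the clause of the definition of ``$t_v$ intersects $t$'' that is satisfied.

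The easy cases are those producing a shared horizontal edge inside $[\alpha,\beta]$. If $[\alpha_v,\beta_v]$ intersects $[\alpha,\beta]$ non-trivially, the common grid edge lies in $[\alpha,\beta]\subseteq P_u$ and in $P_v^a$, and we are done. The coherence sub-clauses (i) and (ii) are handled the same way: if an endpoint of $t$ lies strictly inside $[\alpha_v,\beta_v]$, or an endpoint of $t_v$ lies strictly inside $(\alpha,\beta)$, or some $\etype$-typed endpoint forces a row-$a$ edge of one path to sit inside $[\alpha,\beta]$, then that edge belongs to both $P_u$ and $P_v$. In each of these situations the shared edge is a horizontal edge of row $a$ contained in $[\alpha,\beta]$, so no appeal to the types is needed.

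The delicate case is coherence sub-clause (iii), where an endpoint $x\alpha$ of $t$ (say; the endpoint $y\beta$ is symmetric) with $x\in\{\utype,\dtype\}$ coincides, type included, with an endpoint of $t_v$: thus $P_v$ bends at column $\alpha$ on row $a$ in the direction prescribed by $x$. Here the two paths may overlap row $a$ only at the single point $\alpha$ (this happens when $\alpha$ is the left endpoint of $P_u^a$ and the right endpoint of $P_v^a$), so no horizontal edge is shared. The key observation is that, because $u$ contains $t$, the coherence of $t_u$ with $x\alpha$ forces either $\alpha$ to be interior to $[\alpha_u,\beta_u]$ — in which case $P_u$ carries both horizontal edges adjacent to $\alpha$ and meets the horizontal part of $P_v$ — or $P_u$ itself to bend at $\alpha$ in the direction $x$. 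In the latter subcase both $P_u$ and $P_v$ bend at $(a,\alpha)$ in the \emph{same} direction $x$, so both contain the vertical grid edge issued from $(a,\alpha)$ in that direction, which is the sought shared edge.

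The main obstacle is precisely this boundary subcase, and it is where properness of $t$ earns its keep: when $[\alpha,\beta]$ is a single point there is no horizontal edge to fall back on, but properness guarantees $x=y\in\{\utype,\dtype\}$, so the matching bend directions of $P_u$ and $P_v$ still yield a shared vertical edge. I expect the write-up to consist mostly of bookkeeping these endpoint alternatives (left versus right endpoint, interior versus bend, the symmetric treatment of $\alpha$ and $\beta$), the only genuinely non-routine point being the reduction, via the typed structure, of a single shared grid point to a shared vertical grid edge.
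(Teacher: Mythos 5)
Your proposal is correct and follows essentially the same route as the paper's proof: reduce to the case where the row-$a$ overlap is a single point, and there use the coherence of both $t_u$ and $t_v$ with a non-empty-typed endpoint of $t$ (properness handling the degenerate single-point interval) to force matching bends and hence a shared vertical grid edge. The only difference is organizational — you case on the clauses of ``$v$ intersects $t$'' while the paper cases on whether $[\alpha_u,\beta_u]\cap[\alpha_v,\beta_v]$ contains a grid edge — which does not change the substance of the argument.
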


\begin{proof}
	Let $t_u = [x\alpha, y \beta]$ and $t_v = [x' \alpha', y' \beta']$ be the t-projections of $u$ and $v$ respectively. 
	Let $t = [x_t \alpha_t, y_t \beta_t]$ be the typed interval such that $u$ contains $t$, and $v$ intersects $t$. Then $[\alpha_t, \beta_t] \subseteq [\alpha, \beta]$, and $[\alpha_t, \beta_t] \cap [\alpha', \beta'] \neq \emptyset$. This implies in particular that the intersection of $[\alpha, \beta]$ and $[\alpha', \beta']$ is not empty.

	If this intersection contains a grid edge $e$, then the paths $P_u$ and $P_v$ both contain this edge, and $u,v$ are adjacent in $G$.

	If the intersection is reduced to a single point, then, since we have $\alpha \neq \beta$ and $\alpha' \neq \beta'$, we can assume w.l.o.g. that the right endpoint of $t_v$ coincide with the left endpoint of $t_u$, i.e. $\beta' = \alpha$. By assumption on $u$, we know that $\alpha \leq \alpha_t$. Additionally, since $[\alpha_t, \beta_t] \cap [\alpha', \beta'] \neq \emptyset$, we must have $\alpha_t = \alpha$. Then either $t_v$ is coherent with the endpoint $x\alpha$ of $t$, or $t$ is coherent with the endpoint $y \beta'$ of $t_v$. We will assume the former, the other case can be treated exactly the same way. Since $[\alpha', \beta']$ does not contain the edge at the left of $\alpha_t$, we have necessarily $y' \neq \etype$. This implies that $y'= x_t$, and $x_t \in \{ \utype, \dtype \}$.
	Additionally we know that $t_u$ is coherent with the endpoint $x_t \alpha_t$ of $t$. Consequently, we have $x = x_t = y'$, which implies that both $P_u$ and $P_v$ bend at $\alpha$ in the same direction, and thus $u,v$ are adjacent in $G$.
\end{proof}

%
%

\begin{lemma}\label{lem:proj}
Let $G$ be a $B_2$-EPG graph. If the t-projections of $u$ and $v$ on $a$ intersect, then $uv$ is an edge of $G$.
Moreover if $u$ and $v$ are two vertices containing $a$ in their index and have no other row in common, then $uv$ is an edge of $G$ if and only if the t-projections of $u$ and $v$ on $a$ intersect.
\end{lemma}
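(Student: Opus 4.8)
The plan is to derive the first statement directly from Lemma~\ref{lem:intersect}, and to establish the non-trivial (``only if'') direction of the second statement by a case analysis on the grid edge shared by two adjacent vertices. Throughout I write $t_u$ and $t_v$ for the $t$-projections of $u$ and $v$ on $a$, so that ``the $t$-projections of $u$ and $v$ intersect'' means ``$t_u$ intersects $t_v$.''

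First I would prove the first claim. Since the $t$-projections on $a$ are defined, both $u$ and $v$ contain $a$ in their index. Recall that $v$ always contains its own $t$-projection $t_v$, which is a proper typed interval. By hypothesis $t_u$ intersects $t_v$, and since the relation ``intersects'' is symmetric in its two arguments, $u$ intersects $t_v$. Applying Lemma~\ref{lem:intersect} with the proper typed interval $t=t_v$ (contained by $v$, intersected by $u$) yields that $u$ and $v$ are adjacent. This argument uses nothing about the other rows of $u$ and $v$, so it proves the first sentence in full generality, and in particular the ``if'' direction of the equivalence.

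For the ``only if'' direction I assume now that $u$ and $v$ both contain $a$ in their index, share no other row, and are adjacent, and I aim to show that $t_u$ and $t_v$ intersect. Adjacency means $P_u$ and $P_v$ share a grid edge $e$, which is either horizontal or vertical. If $e$ is horizontal it lies on some row $b$, which then belongs to the index of both $u$ and $v$; as $a$ is their only common row, $b=a$, so $e$ is a common edge of $P_u^a$ and $P_v^a$. Hence $[\alpha,\beta]$ and $[\alpha',\beta']$ share the grid edge $e$, i.e.\ they intersect non-trivially, and $t_u$ intersects $t_v$. This is precisely the step where the ``no other common row'' hypothesis is needed, to forbid the vertices from being adjacent via a horizontal edge on an unrelated row.

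The delicate case, which I expect to be the main obstacle, is when $e$ is vertical, say on column $\gamma$ between consecutive rows $b$ and $b+1$. Here I would invoke the structural description of $B_2$-EPG vertices: with exactly two bends, each path is either a Z-vertex (shape horizontal--vertical--horizontal) or a U-vertex (shape vertical--horizontal--vertical), and in both shapes every column carrying a vertical edge meets the horizontal part $P_u^a$ exactly at an extremity of $P_u^a$, where $P_u$ bends. Thus $\gamma$ is an extremity of $t_u$ with type $\utype$ or $\dtype$, and likewise for $t_v$. The key point is to pin down the direction of these bends: the vertical segment of $P_u$ through $\gamma$ starts at $(a,\gamma)$ and must reach $e$, so it goes upward exactly when $e$ lies above row $a$ (that is, $b\geq a$) and downward exactly when it lies below. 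Since $u$ and $v$ share the very same edge $e$ on the same row $a$, both bends point the same way, so $\gamma$ is an endpoint of both $t_u$ and $t_v$ carrying the same non-empty type $x\in\{\utype,\dtype\}$. By clause~(iii) of the definition of coherence, $t_u$ is then coherent with the endpoint $x\gamma$ of $t_v$, hence $t_u$ and $t_v$ intersect. Combining the two directions establishes the equivalence.
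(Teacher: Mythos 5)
Your proposal is correct and follows essentially the same route as the paper: the first claim is reduced to Lemma~\ref{lem:intersect} via the fact that a vertex contains its own t-projection, and the second is handled by locating the shared grid edge, using the ``no other common row'' hypothesis to force a horizontal shared edge onto row $a$, and observing that a shared vertical edge forces both paths to bend at $(a,\gamma)$ in the same direction, so the typed endpoints coincide. The paper phrases this as a contrapositive and is terser about the vertical case, but the content is the same.
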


\begin{proof}
	Let $u$ and $v$ be two vertices containing $a$ in their index. The first part of the statement is just a corollary of Lemma~\ref{lem:intersect} since $v$ intersects the t-projection of $u$ on~$a$.

	Assume that there is no other row $b$ contained in the index of both $u$ and $v$. And suppose moreover that the t-projections of $u$ and $v$ on $a$ do not intersect. Suppose by contradiction that $u$ and $v$ are adjacent in $G$. The two vertices cannot share a common edge on row $a$, otherwise their projections would intersect. By assumption they cannot share an edge on another row. Consequently, they must have a common edge $e$ on a column, and let $\alpha$ this column. By symmetry, we can assume that $e$ is below the row $a$. Since the paths $P_u$ and $P_v$ have at most two bends, both path must bend downwards at the intersection of row $a$ and column $\alpha$, to intersect the edge $e$. However, this implies that the t-projections of $u$ and $v$ on $a$ intersect, a contradiction.
\end{proof}

\subsection{Projection graphs}

Let $Y$ be the subset of vertices of $G$ such that all the vertices of $Y$ contain $a$ in their index. \emph{The projection graph of $Y$} on $a$ is the graph on vertex set $Y$ such that there is an edge between $u$ and $v$ if and only if the t-projections of $u$ and $v$ on $a$ intersect.
Note that the projection of $Y$ is not necessarily an interval graph since it can contain an induced $C_4$ (see Fig.~\ref{fig:c4induced}). We say that a set of vertices $Y$ is a \emph{clique on $a$} if the t-projection of $Y$ on $a$ is a clique.
Lemma~\ref{lem:proj}, ensures that a clique on $a$ is indeed a clique in the graph $G$.

\begin{figure}
	\centering
	\includegraphics[width=.4\textwidth]{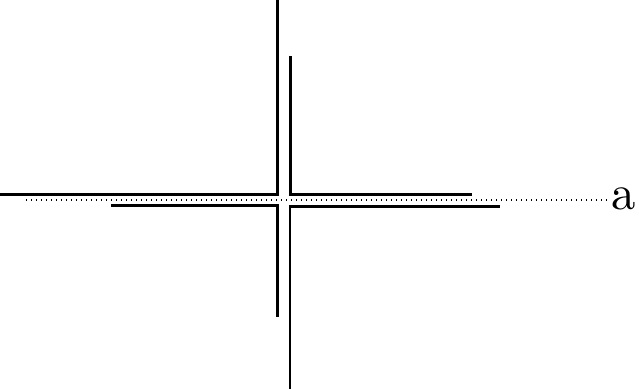}
	\caption{\label{fig:c4induced} In this example, the projection graph on row $a$ of the four vertices is an induced cycle of length four.}
\end{figure}

In the very simple case where the representation uses only two rows $a$ and~$b$, we have the following lemma:
\begin{lemma}
	\label{lem:2track}
Let $G$ be a $B_2$-EPG graph and $G_{ab}$ be the subset of vertices with index $\{a,b\}$. 
Then $G_{ab}$ induces a $2$-track interval graph.
\end{lemma}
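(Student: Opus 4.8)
The plan is to use rows $a$ and $b$ as the two tracks and to realize $G_{ab}$ as the edge union of two interval graphs, one associated with each of these rows.

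First I would record the structural fact that every vertex of $G_{ab}$ is a Z-vertex whose three consecutive segments are horizontal--vertical--horizontal: its two horizontal parts lie on rows $a$ and $b$ and are joined by a single vertical segment on some column $\gamma_u$, which is therefore an endpoint (a bend) of both $P_u^a$ and $P_u^b$. Since every such path runs from row $a$ to row $b$, all of them bend in the \emph{same} vertical direction at their row-$a$ bend (say upwards if $b$ lies above $a$) and likewise in the same direction at their row-$b$ bend. Consequently the t-projection $P_u^a$ has exactly one endpoint of type $\etype$ (the free end where the path stops) and one endpoint of type $\utype$ or $\dtype$ (the bend at $\gamma_u$), the latter type being common to all vertices of $G_{ab}$; the symmetric statement holds on row $b$.

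Next I would prove the key adjacency characterization: for $u,v \in G_{ab}$, the pair $uv$ is an edge of $G$ if and only if the t-projections of $u,v$ on $a$ intersect or the t-projections on $b$ intersect. The ``if'' direction is exactly the first part of Lemma~\ref{lem:proj}. For ``only if'', if $u$ and $v$ are adjacent they share a grid edge, and since both have index $\{a,b\}$ this edge is horizontal on row $a$, horizontal on row $b$, or vertical. A shared horizontal edge on row $a$ (resp.\ $b$) makes the underlying intervals overlap non-trivially, hence the t-projections on $a$ (resp.\ $b$) intersect; a shared vertical edge forces $\gamma_u=\gamma_v$, so both t-projections on $a$ carry a common endpoint with the same non-empty type and thus intersect by coherence. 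This exhausts the cases.

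It then remains to check that each of the two relations --- ``t-projections on $a$ intersect'' and ``t-projections on $b$ intersect'' --- is the edge set of an interval graph. I would do this for row $a$ (row $b$ being identical) by assigning to each $u$ a real interval $J_u^a$: if $P_u^a=[\alpha_u,\beta_u]$ with the bend at the left end, set $J_u^a=[\alpha_u,\beta_u-\tfrac12]$, and if the bend is at the right end, set $J_u^a=[\alpha_u+\tfrac12,\beta_u]$. In words, $J_u^a$ consists of the edge-centres of the segment, the endpoint on the bend side being kept at the integer bend column while the endpoint on the free (stop) side is pulled inward by $\tfrac12$. The claim is that $J_u^a\cap J_v^a\neq\emptyset$ exactly when the t-projections of $u$ and $v$ on $a$ intersect, and this verification is the main obstacle: typed-interval intersection mixes ``sharing a grid edge'' with ``touching at a common bend'', whereas ordinary intervals only know about sharing points, and the construction must reconcile both discrepancies at once. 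I would settle the claim by a short case analysis: a shared horizontal edge corresponds to a common half-integer edge-centre lying in both intervals; a common bend column $\gamma$ corresponds to the common integer point $\gamma$, which both intervals contain because the bend endpoint is never retracted; conversely a single-point intersection of $J_u^a$ and $J_v^a$ can occur only at a common integer (forcing $\gamma_u=\gamma_v$) or at a common half-integer (forcing a shared edge), while any intersection of positive length contains a half-integer that is a common edge-centre, and a spurious intersection at a free endpoint is impossible since such endpoints were pulled in by $\tfrac12$ and hence never meet another interval's endpoint. Writing $H_a$ and $H_b$ for the resulting interval graphs on the two tracks, the adjacency characterization yields $E(G_{ab})=E(H_a)\cup E(H_b)$, so $G_{ab}$ is a $2$-track interval graph.
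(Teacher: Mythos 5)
Your proof is correct and follows essentially the same route as the paper: both use rows $a$ and $b$ as the two tracks, characterize adjacency in $G_{ab}$ as intersection of the t-projections on $a$ or on $b$ (with the common bend column handled by noting that equal index forces a shared vertical edge), and realize each projection graph as an interval graph by keeping the bend endpoint and opening up the free endpoint (the paper uses half-open intervals $[x,y)$ or $(x,y]$ where you retract by $\tfrac12$, which is equivalent). Your case analysis of the interval realization is just a more explicit version of what the paper states briefly.
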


\begin{proof}
Let us first prove that the projection graph of $G_{ab}$ on $a$ is an interval graph. Let $u$ be a vertex of $G_{ab}$ and $x$ and $y$ be the endpoints of $P_u^a$. The vertex $u$ bends either on $x$ or on $y$ (since its index is $\{a,b\}$ it has only one vertical segment). We associate to $u$ the segment $s_u = (x,y]$ if $u$ bends on $y$, and $[x,y)$ if it bends on $x$. Then the t-projection of $u$ and $v$ intersect on $a$ if and only if the intervals $s_u$ and $s_v$ intersect. 
Indeed, if both $u$ and $v$ have a bend at the same column $\alpha$, then they both contain the interval between row $a$ and $b$ on column $\alpha$ because they have the same index $\{a,b\}$. Then the projection graph of $G_{ab}$ on $a$ is an interval graph. 

By Lemma~\ref{lem:proj}, any edge on the projection on $a$ or $b$ is also an edge of $G$. Conversely, if there is an edge between $u$ and $v$ then: 
\begin{itemize}
	\item either their $a$-segments (resp. $b$ segments) intersect on at least one edge, and there is an edge in the projection graph of $G_{ab}$ on $a$ (resp. $b$),
	\item or they share the same column. Then since they bend on the same point of the grid and on the same direction, there is an edge between $u$ and $v$ in the projection graphs of $G_{ab}$ on both $a$ and $b$.
\end{itemize}
So there is an edge between $u$ and $v$ if and only if their t-intervals intersect on $a$ or intersect on $b$. Since the t-projections of $G_{ab}$ on both $a$ and $b$ induce interval graphs, the graph $G_{ab}$ is a $2$-track interval graph.
\end{proof}

Let us end this section with two lemmas that will be widely used all along the paper.

\begin{lemma}
	\label{lem:cliqueProj}
	Let $G$ be a $B_2$-EPG graph and $Y$ be a subset of vertices whose index contain $a$. Suppose that the projection of $Y$ on $a$ is a clique. Then there is a proper typed interval $t$ such that:
	\begin{itemize}
		\item all vertices of $Y$ contain $t$,
		\item if $u$ is a vertex with index $\{a\}$ or $\{a,c\}$ where $c$ is not in the index of any vertex of $Y$, then $u$ is complete to $Y$ if and only if $u$ intersects $t$.
	\end{itemize}
\end{lemma}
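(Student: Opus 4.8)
The plan is to build $t=[x_t\alpha_t,y_t\beta_t]$ by first fixing its underlying interval as the common intersection of the base intervals of the t-projections, and then choosing the two types so that they record exactly which bends are forced on the boundary of the clique. Write $t_v=[x_v\alpha_v,y_v\beta_v]$ for the t-projection of $v\in Y$ on $a$. Since the projection of $Y$ on $a$ is a clique, every pair $t_v,t_w$ intersects; by the remark following the definition of coherence, whenever two typed intervals intersect their base intervals share at least a point (either they overlap on a grid-edge, or one is coherent with an endpoint of the other, which by that remark places the endpoint in the closed base interval of the first). Hence the closed intervals $[\alpha_v,\beta_v]$ pairwise intersect, and by the Helly property of intervals on a line they have a common point. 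I therefore set $\alpha_t=\max_{v\in Y}\alpha_v$ and $\beta_t=\min_{v\in Y}\beta_v$, so that $[\alpha_t,\beta_t]=\bigcap_{v\in Y}[\alpha_v,\beta_v]\neq\emptyset$.

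Next I choose the endpoint types. Let $S_L=\{v\in Y:\alpha_v=\alpha_t\}$ be the vertices realizing the left endpoint; each such $v$ has $\beta_v>\alpha_t$ since $P_v^a$ contains an edge. I set $x_t=d$ if every vertex of $S_L$ bends in a common direction $d\in\{\utype,\dtype\}$ at $\alpha_t$, and $x_t=\etype$ otherwise (some vertex of $S_L$ stops at $\alpha_t$, or two of them bend in opposite directions). I define $y_t$ symmetrically from $S_R=\{v\in Y:\beta_v=\beta_t\}$. When $\alpha_t<\beta_t$ the interval $t$ is automatically proper. The delicate point is the degenerate case $\alpha_t=\beta_t=:p$: here some $v_L\in S_L$ has base interval $[p,\beta_{v_L}]$ and some $v_R\in S_R$ has $[\alpha_{v_R},p]$, and these meet only at $p$; the clique hypothesis forces coherence at $p$, which by case (iii) of the definition forces $v_L$ and $v_R$ to bend in a common direction $d$. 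Propagating this to all endpoint-at-$p$ vertices shows $x_t=y_t=d\in\{\utype,\dtype\}$, so $t$ is proper in this case too.

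For the first bullet I verify that every $v\in Y$ contains $t$. The inclusion $[\alpha_t,\beta_t]\subseteq[\alpha_v,\beta_v]$ holds by the choice of $\alpha_t,\beta_t$, and coherence of $t_v$ with the left endpoint $x_t\alpha_t$ is a short case check: if $\alpha_v<\alpha_t$ then $\alpha_t$ lies in $(\alpha_v,\beta_v)$ and case (i) applies; if $v\in S_L$ then either $x_t=\etype$ and case (ii) applies (as $\beta_v>\alpha_t$), or $x_t=d=x_v$ and case (iii) applies. The right endpoint is symmetric, so $v$ contains $t$. The easy direction of the second bullet is then immediate from Lemma~\ref{lem:intersect}: if $u$ intersects $t$, then since every $v\in Y$ contains $t$, each pair $u,v$ is adjacent, so $u$ is complete to $Y$.

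The main work, and the step I expect to be the obstacle, is the converse: if $u$ does not intersect $t$ then $u$ misses some $v\in Y$. Because $u$ has index $\{a\}$ or $\{a,c\}$ with $c$ used by no vertex of $Y$, the vertex $u$ shares with each $v\in Y$ only the row $a$, so by Lemma~\ref{lem:proj} adjacency of $u$ and $v$ is equivalent to intersection of their t-projections on $a$; thus it suffices to exhibit a single $v$ with $t_u$ and $t_v$ disjoint. If $u$ does not intersect $t$ its base interval cannot overlap $[\alpha_t,\beta_t]$ on an edge, so $u$ lies entirely to the left ($\beta_u\le\alpha_t$) or, symmetrically, entirely to the right; I treat the left case. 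If $\beta_u<\alpha_t$, any $v\in S_L$ is missed since $[\alpha_u,\beta_u]$ is disjoint from $[\alpha_t,\beta_v]$. If $\beta_u=\alpha_t$, then the base intervals touch only at $\alpha_t$ and the sole coherence available there fails; unwinding case (iii) this means $x_t\neq y_u$ or $y_u=\etype$, and the definition of $x_t$ supplies a witness in $S_L$: a vertex of $S_L$ stopping at $\alpha_t$ when $x_t=\etype$ came from a stop, a vertex bending opposite to $y_u$ when $x_t=\etype$ came from conflicting directions, and any vertex of $S_L$ when $x_t=d\neq y_u$. In each sub-case a short coherence computation shows $t_u$ and $t_v$ do not intersect, hence $u\not\sim v$ and $u$ is not complete to $Y$. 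The care needed to match the chosen type $x_t$ against every possible type $y_u$ of the external vertex, together with the single-point degeneracy, is the heart of the argument.
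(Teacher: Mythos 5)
Your proof is correct and follows essentially the same route as the paper's: you pick the same interval $[\alpha_t,\beta_t]$ (rightmost left extremity to leftmost right extremity), assign types by unanimity of bends among the extremal vertices, prove properness via the forced common bend in the degenerate single-point case, and exhibit the same witnesses in $S_L$/$S_R$ for the converse direction. The only cosmetic differences are your (unneeded but harmless) appeal to the Helly property and your use of Lemma~\ref{lem:proj} rather than Lemma~\ref{lem:intersect} to certify non-adjacency, which is if anything the more accurate citation.
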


\begin{proof}
	Let $\alpha$ be the rightmost left extremity of an $a$-interval of $Y$, and $\beta$ be the leftmost right extremity of an $a$-interval of $Y$. Since the projection of $Y$ on $a$ is a clique, $\alpha \leq \beta$. Let $Y_\alpha$ be the set of vertices of $Y$ whose $a$-segment have left extremity $\alpha$, and $Y_\beta$ be the set of vertices of $Y$ whose $a$-segment have right extremity $\beta$. 
	We define the typed interval $t=[x \alpha, y\beta]$  where $x$ is equal to $\utype$ (resp. $\dtype$) if all the vertices of $Y_\alpha$ bend upwards (resp. downwards) at $\alpha$, and is equal to $\etype$ otherwise. Similarly, $y$ is equal to $\utype$ (resp. $\dtype$) if all the vertices of $Y_\beta$ bend upwards (resp. downwards) at $\alpha$, and is equal to $\etype$ otherwise. Let us prove that $t$ satisfies the conclusion of the lemma.

	One can easily check that, by construction, all the vertices of $Y$ contain the typed interval $t$. Indeed $[\alpha,\beta]$ is contained in all the intervals $P_v^a$ for $v \in Y$ by definition of $\alpha$ and $\beta$. 
	Let us prove now by contradiction that the typed interval $t$ is proper. If $\alpha \neq \beta$ then $t$ is proper, so we can suppose that $\alpha=\beta$. Now assume that the types $x$ and $y$ are distinct or equal to $\etype$. Up to symmetry, we can assume that $x \neq \utype$ and $y \neq \dtype$. Consequently, there exists a vertex $v_1 \in Y$ such that $P_{v_1}^a$ has left extremity $\alpha$ and either starts on $\alpha$ or bends downwards on $\alpha$. Similarly, there exists a vertex $v_2 \in Y$ such that $P_{v_2}^a$ has right extremity $\beta=\alpha$ and either ends on $\alpha$ or bends upwards on $\alpha$. But then the t-projection of $v_1$ and $v_2$ do not intersect, a contradiction since the t-projection of $Y$ on $a$ is a clique. 
	\smallskip
	
	Let us finally prove the second point. Suppose that $u$ intersects $t$, then for all $y \in Y$, $y$ contains $t$, and by Lemma~\ref{lem:proj} $u$ and $y$ are adjacent in $G$. Assume now that $u$ does not intersect $t$. Let us prove that there exists $y \in Y$ that is not incident to $u$. 
	Either $P_u^a=[\alpha',\beta']$ does not intersect $[\alpha,\beta]$ or it intersects it on exactly one vertex. We moreover know that if $\alpha=\beta$ then $P_u^a$ does not contain the edge at the left and at the right of $\alpha$ (otherwise $u$ would intersect $t$). So, up to symmetry, we can assume that $\beta' \leq \alpha$. If $\beta' < \alpha$ then let $v$ be a vertex such that $P_v^a$ has leftmost extremity $\alpha$ (such a vertex exists by definition of $\alpha$). The projections of $u$ and $v$ on $a$ do not intersect and Lemma~\ref{lem:proj} ensures that $u$ is not adjacent to $v$.
	
	Assume now that $P_u^a$ intersects $[\alpha,\beta]$ on a single point. Since $P_u^a$ contains at least one edge, this point is either $\alpha$ or $\beta$, $\alpha$ say. Let $x_u$ be the type of $\beta'$. Since $u$ does not intersect $t$, this means that either $x_u \neq x$ or $x_u = x$ and $x_u = \etype$. Up to symmetry, we can assume $x_u \neq \utype$ and $x \neq \dtype$. So there exists $v \in Y$ such that $P_v^a$ has a left extremity at $\alpha$ and $P_v$ either bends upwards at $\alpha$ or has no bend at $\alpha$. Since $x_u \neq \utype$, $P_u^a$ has right extremity $\alpha$ and either bends downwards at $\alpha$ or has no bend at $\alpha$. So the projections of $u$ and $v$ on $a$ does not intersect. By Lemma~\ref{lem:intersect}, $u$ and $v$ are not adjacent in $G$.
\end{proof}

\begin{lemma}
	\label{lem:notCliqueProj}
	Let $G$ be a $B_2$-EPG graph and $Y$ be a subset of vertices with index $\{a,b\}$. Suppose that the projection of $Y$ on $a$ is not clique. Then there is a proper typed interval $t$ such that:
	\begin{itemize}
		\item All vertices of $Y$ intersect $t$,
		\item Let $u$ be a vertex with index $\{a\}$ or $\{a,c\}$ where $c \neq a,b$. Then $u$ is complete to $Y$ if and only if $u$ contains $t$.
	\end{itemize}
\end{lemma}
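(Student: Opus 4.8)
The plan is to reduce everything to the t-projections on row $a$ and then build $t$ as a ``gap interval'' dual to the one constructed in Lemma~\ref{lem:cliqueProj}. First I would fix orientations: up to rotating the grid I may assume that $b$ lies above $a$, so that every $v\in Y$ (having index $\{a,b\}$) has a t-projection $P_v^a$ with exactly one $\etype$ endpoint and one $\utype$ endpoint, namely the bend toward $b$; crucially all bends of $Y$ on $a$ point in the \emph{same} direction $\utype$, which removes the ``direction consensus'' issue present in Lemma~\ref{lem:cliqueProj}. Next, observe that any admissible $u$ (index $\{a\}$ or $\{a,c\}$ with $c\neq a,b$) shares with each $v\in Y$ only the row $a$, so by Lemma~\ref{lem:proj} the statement ``$u$ is complete to $Y$'' is equivalent to ``$P_u^a$ intersects $P_v^a$ for every $v\in Y$''. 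Everything thus becomes a one–dimensional problem on $a$, where by the construction in Lemma~\ref{lem:2track} adjacency is governed by half-open intervals, and the hypothesis ``not a clique on $a$'' means these intervals have no common point.

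Then I would define $t=[x_t\alpha_t, y_t\beta_t]$ as follows. Let $p=\min_{v\in Y}\beta_v$ be the leftmost right-extremity and $q=\max_{v\in Y}\alpha_v$ the rightmost left-extremity of the $a$-intervals of $Y$. Since $Y$ is not a clique on $a$ we must have $p\le q$ (otherwise every $v$ would weakly contain the grid-edge $[q,q+1]$ and $Y$ would be a clique on $a$ by Lemma~\ref{lem:proj}). For the left endpoint: if every $v$ with $\beta_v=p$ bends ($\utype$) at $p$, set $\alpha_t=p$ and $x_t=\utype$; if some such $v$ instead ends (type $\etype$) at $p$, set $\alpha_t=p-1$ and $x_t=\etype$, i.e.\ shift the endpoint out by one edge. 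The right endpoint $(y_t\beta_t)$ is defined symmetrically from $q$. I would then check $t$ is proper: if $p<q$ this is clear, and if $p=q$ a short argument shows that non-cliqueness forces an $\etype$ extremal endpoint on at least one side, triggering a shift, so again $\alpha_t<\beta_t$.

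For the two conclusions, the first bullet (every $v\in Y$ intersects $t$) follows because each $v$ satisfies $\alpha_v\le q\le\beta_t$ and $\beta_v\ge p\ge\alpha_t$, hence $P_v^a$ meets $[\alpha_t,\beta_t]$; a non-extremal $v$ overlaps it in a genuine grid-edge, an $\etype$-extremal $v$ overlaps $t$ in a full edge precisely because of the one-edge shift, and a $\utype$-extremal $v$ touches $t$ only at $p$ (or $q$) but with matching $\utype$ bend, hence is coherent. For the second bullet, the direction ``$u$ contains $t$ $\Rightarrow$ $u$ complete to $Y$'' is immediate from Lemma~\ref{lem:intersect}: $t$ is proper, $u$ contains it, and each $v\in Y$ intersects it. The converse is the reach analysis: applying Lemma~\ref{lem:proj} to an extremal left vertex, adjacency forces $\lambda\le\alpha_t$ together with coherence at $x_t\alpha_t$ (a \emph{strict} $\lambda<p$ in the $\etype$ case, or $\lambda=p$ with a matching up-bend in the $\utype$ case); the symmetric statement on the right gives $\rho\ge\beta_t$ with coherence at $y_t\beta_t$, and together these say exactly that $u$ contains $t$.

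The main obstacle I expect is the boundary bookkeeping concentrated in the converse direction and in the proof of the first bullet, i.e.\ correctly distinguishing the two ways a vertex of $Y$ can be adjacent to $u$ at its binding extremity: sharing a full grid-edge (which an $\etype$-ended vertex forces, so that $u$ must \emph{strictly} pass the extremity) versus merely meeting it at a single point with a coinciding $\utype$ bend. Because ``contains'' at a left endpoint can only demand the inward (rightward) edge, a naive choice $\alpha_t=p$ fails exactly when an $\etype$ endpoint is binding, which is why the one-edge shift to $p-1$ with type $\etype$ is needed; getting this shift, together with the degenerate $p=q$ case, to line up simultaneously with both bullets is the delicate point, while the remaining verifications are routine case checks against the definitions of \emph{coherent}, \emph{intersects}, and \emph{contains}.
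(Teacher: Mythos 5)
Your construction is the same as the paper's: you take the interval between the leftmost right-extremity and the rightmost left-extremity of the $a$-intervals of $Y$, assign each endpoint the common bend type of the extremal vertices when it exists and $\etype$ otherwise, shift an $\etype$ endpoint outward by one grid edge, and then verify properness, the intersection property, and the completeness characterization via the extremal vertices exactly as the paper does. The argument is correct (including the observation that all bends of $Y$ on $a$ point toward $b$, which the paper also uses to rule out $\dtype$), so there is nothing substantive to add.
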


\begin{proof}
	Let $\beta$ be the rightmost left extremity of an $a$-interval of $Y$, and $\alpha$ be the leftmost right extremity of an $a$-interval of $Y$. Since the projection of $Y$ on $a$ is not a clique, $\alpha \leq \beta$. 
	Let $Y_\alpha$ be the subset of vertices of $Y$ whose $a$-segment have right extremity $\alpha$, and $Y_\beta$ be the subset of vertices of $Y$ with left extremity $\beta$. We consider the typed interval $t$ given by $t=[x\alpha', y\beta']$ where:
	\begin{itemize}
		\item $x$ is equal to $\utype$ (resp. $\dtype$) if all vertices of $Y_\alpha$ bend upwards (resp. downwards) at $\alpha$, and is equal to $\etype$ otherwise, 
		\item similarly, $y$ is equal to $\utype$ (resp. $\dtype$) if all vertices of $Y_\beta$ bend upwards (resp. downwards) at $\alpha$, and is equal to $\etype$ otherwise,
		\item if $x \neq \etype$ then $\alpha' = \alpha$, otherwise $\alpha' = \alpha-1$,
		\item if $y \neq \etype$ then $\beta' = \beta$, otherwise $\beta' = \beta+1$.
	\end{itemize}

	Let us prove that $t$ is a proper typed interval. We can assume w.l.o.g. that $a$ is below $b$. If $\alpha' \neq \beta'$ then $[\alpha',\beta']$ is non reduced to a single point and then $t$ is proper. So we can assume that $\alpha'=\beta'$. In the construction, if one of $x$ or $y$ is equal to $\etype$, then $\alpha' \neq \beta'$ and then $t$ is proper. Additionally, since all vertices have index $\{a,b\}$, and $a$ is below $b$, all vertices bend upwards on row $a$. Consequently, $x$ and $y$ cannot be equal to $\dtype$. If none of them is $\etype$, then they are both $\utype$, and again $t$ is proper.

	We will first prove that every vertex of $Y$ intersects $t$. Let $v$ be a vertex of $Y_{\alpha}$, then either $\alpha' = \alpha -1$ and then $P_v^a$ contains the edge $[\alpha-1, \alpha]$ and thus $v$ intersects $t$, or $\alpha'=\alpha$. In that case, $x$ is equal to $\utype$ (resp. $\dtype$) if all the vertices of $Y_{\alpha}$ bends upwards (resp. downwards) on $\alpha$. But then the t-projection of $v$ has a right endpoint with extremity $\alpha$ and type $\utype$ (resp. $\dtype$), and then by definition, the t-projection of $v$ is coherent with the endpoint $x\alpha$ of $t$, and $v$ intersects $t$. A similar proof holds to show that t-projections of vertices of $Y_\beta$ intersect $t$. Now let $v \in Y \setminus (Y_{\alpha} \cup Y_{\beta})$. The left extremity of $v$ is before $\alpha$ and its right extremity is after $\beta$. Let $\alpha_v$ and $\beta_v$ be the left and right extremities of $P_v^a$. They satisfy $\alpha_v \leq \alpha$ and $\beta_v \leq \beta$. If $\alpha' = \alpha$, then $\alpha$ is contained in the open interval $(\alpha_v, \beta_v)$, and the t-projection of $v$ on $a$ is coherent with the extremity $x\alpha$ of $t$, and by consequence $v$ intersects $t$. If $\alpha' = \alpha -1$, then the t-projection of $v$ weakly contains $[\alpha', \beta']$, and this interval is not reduced to a single point. Consequently $v$ intersects $t$.
	
	\smallskip
	
	Let us now prove the second point. Let $u$ be a vertex such that the index of $u$ contains $a$ and not $b$. If $u$ contains $t$, then for any $y \in Y$, $y$ intersects $t$, and by Lemma~\ref{lem:proj} $u$ and $y$ are adjacent in $G$. 
	
	Assume now that $u$ does not contain $t$. We will show that there is a $y \in Y$ such that $u$ and $y$ are not adjacent in $G$. By Lemma~\ref{lem:intersect}, it suffices to show that there exists a vertex $v$ such that the t-projections of $u$ and $v$ on $a$ do not intersect. First suppose that $P_u$ does not weakly contain $[\alpha', \beta']$. We can assume w.l.o.g. that $P_u$ does not contain $\alpha'$. Let $\alpha_u$ and $\beta_u$ be the left and right endpoints of $P_u^a$. If $\beta_u < \alpha'$, then there is a vertex $v \in Y_\beta$ such that the left endpoint of $P_v^a$ is $\beta$, and $\alpha_u < \alpha' \leq \alpha \leq \beta$. Consequently the t-projection of $u$ and $v$ do not intersect. Otherwise, we can assume $\alpha_u > \alpha'$. If $\alpha' = \alpha$, then there is a vertex $v \in Y_\alpha$ such that the right endpoint of $P_v^a$ is $\alpha$. Then, since $\alpha = \alpha' < \alpha_u$, the t-projection of $u$ and $v$ do not intersect, and by Lemma~\ref{lem:intersect}, $u$ and $v$ do not intersect. If $\alpha' = \alpha -1$, then there is a vertex $v$ such that $\alpha$ is the right endpoint of $P_v^a$, and $P_v$ stops at $\alpha$. Since $\alpha' < \alpha_u$, we have $\alpha \leq \alpha_u$. This implies that $[\alpha_u, \beta_u]$ does not contain the edge at the left of $\alpha$, which implies that the t-projection of $u$ is not coherent with the right extremity of the t-projection of $v$. The same arguments show that the t-projection of $v$ on $a$ is not coherent with the endpoints of the t-projection of $u$, and thus the t-projections of $u$ and $v$ do not intersect.
	
	Suppose now that $[\alpha_u, \beta_u]$ weakly contains $[\alpha', \beta']$. Let $t_u$ be the t-projection of $u$. Since $u$ does not contain $t$, $t_u$ is not coherent with one of its endpoints, say $\alpha'$ up to symmetry. Then we have $\alpha' \in \{\alpha_u, \beta_u\}$. Suppose by contradiction that $x = \etype$. Then we can't have $\alpha' = \beta_u$ since otherwise $t$ would not be proper. So necessarily $\alpha' = \alpha_u$, but this implies that $t_u$ is coherent with the endpoint $\alpha'$ since $[\alpha_u, \beta_u]$ contains the edge at the left of $\alpha'$, a contradiction. Since $a$ is below $b$, we must have $x = \utype$. If $\alpha' = \alpha_u$, then since $t_u$ is not coherent with the endpoint $\utype \alpha'$ of $t$, necessarily the right endpoint of $t_u$ has type $\etype$. Then, if $v \in Y_\alpha$, then by construction of $t$, since $x = \utype$ we have $\alpha' = \alpha$, and $\alpha_u$ is the right endpoint of $P_v^u$. Since $P_u$ does not bend upwards at $\alpha$, the t-projections of $u$ and $v$ do not intersect. If $\alpha' = \alpha_u$, then $\alpha' = \beta'$, and since $t$ is proper, we necessarily have $y = \utype$. A similar argument shows that there exists $v \in Y_\beta$ whose t-projection does not intersect $t_u$.
\end{proof}

\section{Maximum clique in $B_2$-EPG graphs}

\subsection{Graphs with Z-vertices}\label{sec:Zvertices}

We start with the case where the graph only contains Z-vertices. We will show in Section~\ref{sec:gen} that it is possible to treat independently Z-vertices and U-vertices. The remaining part of Section~\ref{sec:Zvertices} is devoted to prove the following theorem.

\begin{theorem}\label{th:Zclique}
	Let $G$ be a $B_2$-EPG graph with a representation containing only Z-vertices. The size of a maximum clique can be computed in polynomial time.
\end{theorem}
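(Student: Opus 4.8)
The plan is to compute the maximum clique by enumerating a polynomial family of combinatorial \emph{anchors}, so that every maximal clique is a clique in the subgraph determined by some anchor, and so that for a fixed anchor the maximum clique can be read off from an interval-graph computation. The whole engine is the projection machinery of Section~\ref{sec:typed}: Lemma~\ref{lem:proj} lets us test adjacency by intersecting t-projections, and Lemmas~\ref{lem:cliqueProj} and~\ref{lem:notCliqueProj} let us compress the condition ``be complete to a given sub-clique lying on a row $a$'' into a single test against one proper typed interval. The role of the anchors is to fix, for the unknown clique, the few rows (or the single column) on which these typed-interval tests have to be performed.

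The heart of the argument is a structural dichotomy for a clique $X$ of Z-vertices. Recall that a Z-vertex $u$ has index $\{a_u,b_u\}$ and a single vertical segment, lying on one column $\gamma_u$; two Z-vertices with disjoint indices share no row, hence can only be adjacent through a common vertical edge, which forces $\gamma_u=\gamma_v$. I would first prove that if three vertices of $X$ have pairwise disjoint indices, then all of $X$ lies on one common column. Indeed the three share a common column $\gamma$ (each pair shares its unique column), and any further $z\in X$ has only two index rows, so it shares a row with at most two of the three; with the third it must share a vertical edge, i.e.\ $\gamma_z=\gamma$. From this I derive the dichotomy: either (i) all vertices of $X$ have their vertical segment on a single column $\gamma$, or (ii) the index graph of $X$ (rows as vertices, indices as edges) has no matching of size three, hence matching number at most two and therefore a vertex cover of size at most four; that is, there are at most four rows such that every vertex of $X$ contains one of them in its index.

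Given the dichotomy, the algorithm enumerates anchors. In case~(ii) I would enumerate all $O(n^4)$ quadruples of rows; once such a quadruple is fixed, every vertex of the sought clique touches one of these four rows, and I partition the candidates accordingly. On each anchor row $a$ I use Lemma~\ref{lem:cliqueProj} (when the clique projects to a clique on $a$) or Lemma~\ref{lem:notCliqueProj} (when it does not) to replace ``complete to the part of $X$ seen on $a$'' by a single typed-interval test, so that the vertices that may still be added are exactly those intersecting or containing one fixed proper typed interval; by Lemma~\ref{lem:proj} the residual optimization on each row is an interval-graph maximum-clique computation, which is polynomial. In case~(i) I would enumerate the $O(n)$ columns; once $\gamma$ is fixed, all vertical segments lie on the single line $\gamma$ and form an interval graph there, and the remaining horizontal interactions are again handled row by row through the same typed-interval tests. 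Taking the maximum over all anchors yields the size of a maximum clique.

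The main obstacle is the per-anchor optimization rather than the enumeration. Even after fixing four rows, the induced subgraph is a union of several interval graphs and is genuinely \emph{not} an interval graph: its projection on a row may contain an induced $C_4$ (Figure~\ref{fig:c4induced}), so one cannot simply invoke an interval-graph clique algorithm. The delicate work is to show that a maximal clique, once restricted to one anchor row, behaves like a clique on that row in the sense of its t-projection (Lemma~\ref{lem:proj}), so that Lemmas~\ref{lem:cliqueProj} and~\ref{lem:notCliqueProj} apply and turn each global completeness requirement into an intersection or containment test against one typed interval; combining the at most four such tests coherently, and making sure that no adjacency realised through a column is lost, is where the real care is needed. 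I expect the single-column case~(i) to require a separate but analogous argument, exploiting that on a fixed column the vertical segments already induce an interval graph.
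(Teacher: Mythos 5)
Your anchor/dichotomy idea is sound as far as it goes: the observation that three pairwise index-disjoint Z-vertices force the whole clique onto one column, and that otherwise the index graph of the clique has matching number at most two and hence a vertex cover of size at most four, is correct and could replace part of the paper's case analysis. But the proposal breaks down exactly at the step you yourself flag as ``where the real care is needed,'' and that step is not a technicality. First, your enumeration produces only rows (or a column), whereas Lemmas~\ref{lem:cliqueProj} and~\ref{lem:notCliqueProj} build their typed interval \emph{from the unknown clique} $X$; the algorithm does not have $X$, so it must additionally enumerate candidate typed intervals on each anchor row. This is repairable (the paper shows only $O(n^2)$ typed intervals per row need to be considered, via ``important points''), but your proposal never closes this loop, and it is precisely the enumeration of typed intervals --- not of rows --- that defines the paper's ``good graphs.''

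Second, and fatally as written, the claim that ``the residual optimization on each row is an interval-graph maximum-clique computation'' is false. Once the typed intervals are fixed, the residual candidate set is a join of cliques with the set of vertices of index $\{a,b\}$ that merely \emph{intersect} both $t_a$ and $t_b$ without containing either; this last set is exactly the part whose projection can contain an induced $C_4$ (Figure~\ref{fig:c4induced}), and it is neither a clique nor an interval graph on either row. The paper handles it by proving it is a $2$-track interval graph (Lemma~\ref{lem:2track}) and invoking the nontrivial polynomial-time maximum-clique algorithm for $2$-track interval graphs from~\cite{FrancisGO2015}, and handles the union of two cliques via minimum vertex cover in the bipartite complement. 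Neither ingredient appears in your proposal, so the per-anchor optimization remains unsolved; you have restated the difficulty rather than resolved it. The single-column case has the same defect: vertices whose vertical segments lie on $\gamma$ may also be adjacent through shared rows, so the graph there is again a multi-track structure, not the interval graph of the vertical segments.
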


Note that, up to rotation of the representation of $G$, this theorem also holds for U-vertices. In other words, the size of a maximum clique can be computed in polynomial time if the graph only contains U-vertices.

The proof of Theorem~\ref{th:Zclique} is divided in three steps. We will first define a notion of good subgraphs of $G$, and prove that:
\begin{itemize}
	\item there is a polynomial number of good subgraphs of $G$ and,
	\item a maximum clique of a good graph can be computed in polynomial time,
	\item and any maximal clique of $G$ is contained in a good subgraph.
\end{itemize}

Recall that a clique is \emph{maximum} if its size is maximum. And it is \emph{maximal} if it is maximal by inclusion.
The first point is an immediate corollary of the definition of good graphs. The proof of the second point consists in decomposing good graphs into sets on which a maximum clique can be computed efficiently. The proof of the third point, the most complicated, will be divided into several lemmas depending on the structure of the maximal clique we are considering.

An induced subgraph $H$ of $G$ is a \emph{good graph} if one of the following holds:
\begin{enumerate}[label=(\Roman*)]
	\item \label{type:case1} there are two rows $a$ and $b$, and two proper typed intervals $t_a$ and $t_b$ on $a$ and $b$ respectively such that $H$ is the subgraph induced by vertices $v$ such that, $v$ contains $t_a$, or $v$ contains $t_b$, or $v$ intersects both $t_a$ and $t_b$,
	\item \label{type:case2} or there are three rows $a$, $b$, and $c$, and three proper typed intervals $t_a$, $t_b$, and $t_c$ on $a, b, c$ respectively such that $H$ is the subgraph induced by vertices $v$ such that, either $v$ contains $t_a$, or $v$ contains $t_b$, or $v$ intersects $t_b$ and contains~$t_c$.
\end{enumerate}

\begin{lemma}
	\label{lem:goodGraphs}
	Let $G$ be a $B_2$-EPG graph.
	There are $\mathcal{O}(n^6)$ good graphs, and a maximum clique of a good graph can be computed in polynomial time.
\end{lemma}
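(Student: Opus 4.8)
The bound on the number of good graphs is essentially a counting argument. The plan is as follows. First I would count type~\ref{type:case1} good graphs. Such a graph is determined by a choice of two rows $a,b$ (there are $O(n)$ rows, since every row is the index of some vertex, giving $O(n^2)$ pairs) together with two proper typed intervals $t_a,t_b$. The key observation is that a typed interval is determined by its two typed endpoints, and a typed endpoint $(x,\alpha)$ is determined by a column $\alpha$ (of which there are $O(n)$, again because each relevant column is visited by some vertex) and a type $x$ among the three possibilities $\{\etype,\utype,\dtype\}$. Hence there are only $O(n)$ choices for each typed point, so $O(n^2)$ proper typed intervals per row, and therefore $O(n^2 \cdot n^2 \cdot n^2) = O(n^6)$ good graphs of type~\ref{type:case1}. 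For type~\ref{type:case2}, a good graph is determined by three rows and three proper typed intervals, one per row; an identical count gives $O(n^3 \cdot (n^2)^3)$ naively, but since the three typed intervals live on the three chosen rows the same reasoning yields $O(n^6)$ after observing that what matters is only the combinatorial position of each endpoint relative to the $O(n)$ grid lines actually used, so the two cases together contribute $O(n^6)$ good graphs. The mild subtlety to be careful about is that columns and rows should be counted only up to the combinatorially distinct positions relevant to the vertices of $G$, which keeps the base quantity linear in $n$; I would state this normalization explicitly before counting.

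\textbf{Computing a maximum clique in a good graph.}

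For the algorithmic half I would treat the two types separately, in each case reducing the problem to computing a maximum clique in a $2$-track interval graph, which is polynomial because it reduces to interval graphs on each track. Consider first a type~\ref{type:case1} good graph $H$ with data $(a,t_a,b,t_b)$. Its vertices split according to whether they contain $t_a$, contain $t_b$, or intersect both. The plan is to argue that any clique of $H$ restricted to each of these behavioural classes projects nicely onto $a$ and $b$. Specifically, the vertices containing $t_a$ form, by Lemma~\ref{lem:intersect}, a set pairwise adjacent to every vertex intersecting $t_a$; I would use this to show that a maximum clique is obtained by separately optimizing the projection graphs on $a$ and on $b$ and combining them. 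The heart of the reduction is that on a fixed row the projection graph, while not necessarily an interval graph (it may contain an induced $C_4$ as in Fig.~\ref{fig:c4induced}), becomes an interval graph once we have pinned down a common typed interval that all members of the clique must contain or intersect — this is exactly the content of Lemmas~\ref{lem:cliqueProj} and~\ref{lem:notCliqueProj}. Using those two lemmas I would show that a maximum clique of $H$ is a maximum clique of an auxiliary $2$-track interval graph built from the two row-projections, and invoke Lemma~\ref{lem:2track} together with polynomial-time maximum clique in interval graphs.

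\textbf{The main obstacle and the type~\ref{type:case2} case.}

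The harder and more delicate part, which I expect to be the main obstacle, is the type~\ref{type:case2} case, because the three rows interact asymmetrically: the defining condition ``intersect $t_b$ and contain $t_c$'' couples rows $b$ and $c$, so one cannot simply optimize the three projection graphs independently. The plan is to first dispose of the vertices that ``contain $t_a$'': by Lemma~\ref{lem:intersect} these are universal to everything intersecting $t_a$, so they can be handled by an interval-graph computation on row $a$ and then merged. This isolates the genuinely two-dimensional core, namely the vertices governed by the coupled condition on $b$ and $c$. For that core I would again aim to build a $2$-track interval representation — one track coming from the projection on $b$ (using the ``intersect $t_b$'' constraint) and the other from the projection on $c$ (using the ``contain $t_c$'' constraint) — so that adjacency of two such vertices is captured by intersection on at least one of the two tracks, matching the structure in the proof of Lemma~\ref{lem:2track}. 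The subtle point to get right is that ``contains'' and ``intersects'' are not symmetric relations, so I must verify, via Lemmas~\ref{lem:intersect}, \ref{lem:cliqueProj} and~\ref{lem:notCliqueProj}, that within a clique the mutual-intersection condition on each row is faithfully recorded by an ordinary interval intersection on the corresponding track; once that compatibility is established, the maximum clique reduces to a $2$-track interval maximum-clique computation and is therefore polynomial. I would close by noting that combining the $O(n^6)$ count with the per-graph polynomial algorithm gives the overall polynomial running time claimed.
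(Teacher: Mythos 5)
Your counting argument is in the same spirit as the paper's (normalize interval endpoints to the combinatorially relevant grid positions), but as written it does not deliver the stated bound for type~\ref{type:case2}: choosing three rows and, on each row, a typed interval from a pool of $O(n^2)$ gives $O(n^3\cdot (n^2)^3)=O(n^9)$, and the sentence meant to bring this down to $O(n^6)$ only re-asserts the normalization you have already spent. The paper's sharper count is that there are $O(n)$ \emph{important points} in total over the whole grid (each path contributes $O(1)$ of them, and an important point determines its own row), hence $O(n^2)$ typed intervals globally, hence $O(n^4)$ good graphs of type~\ref{type:case1} and $O(n^6)$ of type~\ref{type:case2}. This part of your argument is repairable.

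The algorithmic half has a genuine gap. Write $H_a$, $H_b$ for the vertices of $H$ containing $t_a$, $t_b$, and $H_{ab}$ for those intersecting both. Your plan is to encode all of $H$ as a $2$-track interval graph built from the projections on rows $a$ and $b$; but the adjacency between a vertex of $H_a$ (whose index may be $\{a,c\}$) and a vertex of $H_b$ (index $\{b,d\}$) is \emph{not} determined by their projections on $a$ and $b$ --- they may meet on a shared column, or on the row $c=d$ --- so your auxiliary graph is missing edges and its maximum clique can be strictly smaller than that of $H$. For the same reason, ``separately optimizing the projection graphs on $a$ and on $b$ and combining them'' fails: $H_a$ is not complete to $H_b$, so the two optimizations cannot be decoupled. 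The idea you are missing is that, by Lemma~\ref{lem:intersect}, $H_a$ and $H_b$ are each \emph{cliques} (any two vertices containing the proper typed interval $t_a$ are adjacent), so $H_a\cup H_b$ induces the complement of a bipartite graph, where a maximum clique is computable in polynomial time (maximum independent set in a bipartite graph) \emph{whatever} the cross-edges are; and $H$ is the join of $H_a\cup H_b$ with $H_{ab}$, whose vertices have index exactly $\{a,b\}$ and which is therefore a genuine $2$-track interval graph by Lemma~\ref{lem:2track}. That is the only place the $2$-track machinery is needed. Finally, you have the relative difficulty of the two types backwards: a type~\ref{type:case2} good graph is the union of the clique $H_a$ and the clique $H_b\cup H_{bc}$ (the set $H_{bc}$ is a clique since all its vertices contain $t_c$, and it is complete to $H_b$ by Lemma~\ref{lem:intersect}), hence co-bipartite outright --- no track construction and no delicate $b$--$c$ coupling is required there.
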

\begin{proof}
	A good graph is defined by two or three typed intervals. At first glance, to define a typed interval we need to choose a row, and then two points on it. So a natural upper bound on the number of typed interval is $O(n^3)$. However, we can make a slightly better evaluation. A point of the grid is \emph{important} if the path of a vertex ends or bends on this point or on a point incident to it. There is a linear number of important points since every path defines a constant number of important points. One can easily notice, given an interval $t_a$ on row $a$, replacing an extremity $r$ of $t_a$ by the important point $s$ that is the closest from $r$ on row $a$ does not modify the set of vertices intersecting or complete to $t_a$. Indeed, no path starts, stops or bends on the interval $[r,s]$ on row $a$.
	So we can assume that all the endpoints of typed intervals are important points. This implies that there are at most $O(n^2)$ typed intervals. As a consequence, there are $O(n^4)$ good graphs of type~\ref{type:case1}, and $O(n^6)$ good graphs of type~\ref{type:case2}.

	Let us now prove that a maximum clique can be computed in polynomial time in a good graph.
	\begin{itemize}
		\item Let $H$ be a good graph of type~\ref{type:case1} defined by two proper typed intervals $t_a$ and $t_b$. Let $H_a$ be the subset of vertices containing $t_a$, and $H_b$ be the set of vertices containing $t_b$, and $H_{ab}$ be the other vertices of $H$ (intersecting both $t_a$ and $t_b$ by definition of $H$). By Lemma~\ref{lem:intersect}, both $H_a$ and $H_b$ are complete to $H_{ab}$ and both $H_a$ and $H_b$ are cliques. Let $H_1$ be the graph induced by $H_a \cup H_b$, and $H_2$ be the graph induced by $H_{ab}$. Then $H$ is the join of $H_1$ and $H_2$ (i.e. there is an edge between any vertex of $H_1$ and any vertex of $H_2$). So a maximum clique of $H$ is the union of a maximum clique of $H_1$ and a maximum clique of $H_2$. Moreover:
		\begin{itemize}
			\item Since both $H_a$ and $H_b$ induce cliques, $H_1$ is the complement of a bipartite graph. Computing a maximum clique in $H_1$ is the same as computing a maximal independent set in $\bar{H_1}$, the complement graph of $H_1$ which is bipartite.
			Additionally, computing a maximum independent set in a bipartite graph can be done in polynomial time. Indeed, a maximum independent set is the complement of a minimum vertex cover. In bipartite graphs, a minimum vertex cover can be computed in polynomial time using for instance Linear Programming.
			\item By Lemma~\ref{lem:2track}, $H_2$ is a $2$-track interval graph, on which a maximum clique can be computed in polynomial time~\cite{FrancisGO2015}.
		\end{itemize}
		Consequently, a maximum clique of both $H_1$ and $H_2$, and then of $H$, can be computed in polynomial time.
		
		\item Let $H$ be a good graph of type~\ref{type:case2} defined by three proper typed intervals $t_a$, $t_b$ and $t_c$. Let $H_a$ and $H_b$ be the set of vertices containing $t_a$ and $t_b$ respectively, and $H_{bc}$ be the set of vertices intersecting $t_b$  and containing $t_c$. By Lemma~\ref{lem:intersect}, $H_a$, $H_b$, and $H_{bc}$ are cliques since for each of these sets there is a proper typed interval contained in every vertex of the set. Moreover $H_{bc}$ is complete to $H_{b}$ since vertices of $H_b$ contain $t_b$ and vertices of $H_{bc}$ intersect it. Let $H_1= H_a$ and $H_2 = H_b \cup H_{ab}$. Both $H_1$ and $H_2$ induce cliques. So $H$ is the complement of a bipartite graph, and a maximum clique of $H$ can be computed in polynomial time.
	\end{itemize}
\end{proof}

The remaining part of Section~\ref{sec:Zvertices} is devoted to prove that any maximal clique of $G$ is contained in a good graph.

\begin{lemma}
	\label{lem:2rows}
	Let $G$ be a $B_2$-EPG graph containing only Z-vertices, and $X$ be a clique of $G$. Assume that there are two rows $a$ and $b$ such that every horizontal segment of $X$ is included in either $a$ or $b$, then $X$ is contained in a good graph.
\end{lemma}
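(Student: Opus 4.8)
The plan is to place $X$ inside a good graph of type~\ref{type:case1}, i.e. to produce two proper typed intervals $t_a$ on $a$ and $t_b$ on $b$ so that every vertex of $X$ satisfies the membership condition of such a graph. First I would record that the hypothesis forces a very rigid structure on $X$: since $G$ contains only Z-vertices, each vertex of $X$ has exactly two horizontal segments lying on two distinct rows, and since all horizontal segments of $X$ lie on $a$ or $b$, every vertex of $X$ has index exactly $\{a,b\}$. In particular the projection lemmas can be applied to all of $X$ on both rows $a$ and $b$, and it suffices to find a good graph $H$ with $X \subseteq V(H)$, because $X$ is automatically a clique of the induced subgraph $H$.

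Next I would split into cases according to whether the projection graphs of $X$ on $a$ and on $b$ are cliques. Suppose first that one of them, say the projection on $a$, is a clique. Then Lemma~\ref{lem:cliqueProj} yields a proper typed interval $t_a$ on $a$ that is contained in every vertex of $X$. Picking for $t_b$ any proper typed interval on $b$ (for instance the t-projection on $b$ of an arbitrary vertex of $X$, which is proper since it carries at least one edge), the good graph of type~\ref{type:case1} associated with $(t_a,t_b)$ already contains all of $X$: each vertex of $X$ satisfies the first disjunct ``$v$ contains $t_a$'', so no condition on $t_b$ is needed. This also covers the mixed situation where the projection on $a$ is a clique but the one on $b$ is not.

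It remains to treat the case where neither projection is a clique. Here I would apply Lemma~\ref{lem:notCliqueProj} to $X$ on row $a$ and again on row $b$ --- this is legitimate precisely because every vertex of $X$ has index $\{a,b\}$, which is the hypothesis of that lemma. This produces proper typed intervals $t_a$ and $t_b$ such that every vertex of $X$ intersects $t_a$ and every vertex of $X$ intersects $t_b$. Consequently every vertex of $X$ satisfies the third disjunct ``$v$ intersects both $t_a$ and $t_b$'', so $X$ is contained in the good graph of type~\ref{type:case1} built from $(t_a,t_b)$.

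I do not expect a serious obstacle: the content is entirely packaged in Lemmas~\ref{lem:cliqueProj} and~\ref{lem:notCliqueProj}, and the only point requiring care is matching the right lemma to each case (containment when a projection is a clique, intersection when it is not) and checking that the chosen typed intervals are proper. The most delicate verification is simply that all vertices of $X$ genuinely have index $\{a,b\}$, since this is what licenses the use of Lemma~\ref{lem:notCliqueProj} on both rows; once that is established, membership in the good graph follows directly from the appropriate disjunct.
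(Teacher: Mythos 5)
Your proof is correct, but it is considerably more elaborate than the one in the paper. The paper simply takes $t_a$ and $t_b$ to be the typed intervals consisting of the \emph{whole} rows $a$ and $b$: since every vertex of $X$ is a Z-vertex with index $\{a,b\}$, its $a$-interval and its $b$-interval each contain at least one grid edge, so every vertex of $X$ intersects both $t_a$ and $t_b$ non-trivially and therefore satisfies the third disjunct of a good graph of type~\ref{type:case1} --- no case analysis and no appeal to Lemmas~\ref{lem:cliqueProj} or~\ref{lem:notCliqueProj} is needed. Your route, splitting on whether the projection graphs on $a$ and $b$ are cliques and invoking Lemma~\ref{lem:cliqueProj} (containment) in one case and Lemma~\ref{lem:notCliqueProj} (intersection) in the other, does go through: the preliminary observation that every vertex of $X$ has index exactly $\{a,b\}$ is the right justification for applying Lemma~\ref{lem:notCliqueProj} on both rows, and in each case all of $X$ ends up satisfying one of the disjuncts defining a type~\ref{type:case1} good graph. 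What your version buys is essentially a rehearsal of the technique that the paper actually needs for the harder Lemmas~\ref{lem:0clique} and~\ref{lem:1clique}, where the full rows no longer work because the clique also contains vertices off the two rows; for the present statement, however, the hypothesis that \emph{every} horizontal segment of $X$ lies on $a$ or $b$ makes the trivial choice of typed intervals sufficient.
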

\begin{proof}
	By taking two typed intervals $t_a$ and $t_b$ consisting of the whole rows $a$ and $b$, the clique $X$ is clearly contained in a good graph of type~\ref{type:case1}.
\end{proof}

We say that a set of vertices $X$ \emph{intersects} a column $\alpha$ (resp. a row $a$) if at least one vertex of $X$ intersects the column $\alpha$ (resp. the row $a$). If $X$ is a clique of $G$, and $a$, $b$ two rows of the grid, we denote by $X_{ab}$ the subset of vertices of $X$ with index $\{a,b\}$.

The three following lemmas allow us to prove that all the cliques of a $B_2$-EPG graph are included in a good subgraphs. They use the same kind of techniques. The main idea is, using the tools of Section~\ref{sec:typed}, to find typed intervals which describe well the vertices in a clique $X$.

\begin{lemma}\label{lem:0clique}
	Let $G$ be a $B_2$-EPG graph containing only Z-vertices, and $X$ be a clique of $G$. If there are two rows $a$ and $b$ such that the projection graphs of $X_{ab}$ on $a$ and $b$ are not cliques, then $X$ is included in a good graph.
\end{lemma}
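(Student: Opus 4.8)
The plan is to build a good graph of type~\ref{type:case1} directly from the rows $a$ and $b$. Since the projection of $X_{ab}$ on $a$ is not a clique, I would apply Lemma~\ref{lem:notCliqueProj} to $Y=X_{ab}$ on row $a$ to obtain a proper typed interval $t_a$ such that every vertex of $X_{ab}$ intersects $t_a$, and such that any vertex $w$ with index $\{a\}$ or $\{a,c\}$ (with $c\neq a,b$) is complete to $X_{ab}$ if and only if $w$ contains $t_a$. Applying the same lemma on row $b$ gives a proper typed interval $t_b$ with the symmetric properties. The goal is then to show that $X$ is contained in the good graph of type~\ref{type:case1} defined by $t_a$ and $t_b$, i.e. that every $v\in X$ either contains $t_a$, or contains $t_b$, or intersects both $t_a$ and $t_b$.

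Granting for a moment that every vertex of $X$ has $a$ or $b$ in its index, the verification is a short case analysis. If $v\in X_{ab}$, then $v$ intersects $t_a$ and intersects $t_b$ by the first conclusion of Lemma~\ref{lem:notCliqueProj}, so $v$ lies in the good graph. If $v$ has index $\{a,c\}$ with $c\neq b$, then $v$ is complete to $X_{ab}$ because $X$ is a clique, so by the second conclusion of Lemma~\ref{lem:notCliqueProj} the vertex $v$ contains $t_a$; the symmetric argument handles index $\{b,c\}$ with $c\neq a$. In every case $v$ belongs to the good graph, so $X$ is contained in it.

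The main obstacle is therefore to rule out a vertex $v\in X$ whose index $\{c,d\}$ is disjoint from $\{a,b\}$, and here I would argue geometrically. Such a $v$ is adjacent to every vertex of $X_{ab}$ (as $X$ is a clique), and since $v$ and any $w\in X_{ab}$ share no row, each of these shared edges must be a \emph{vertical} edge. As a Z-vertex, $v$ has a single column, say $\gamma$, carrying its unique vertical segment; consequently the shared edge with each $w$ lies on $\gamma$, which forces every $w\in X_{ab}$ to use $\gamma$ for its own vertical segment as well. Then each $w\in X_{ab}$ bends at the grid point $(a,\gamma)$ on row $a$, and since all these vertices have the same index $\{a,b\}$ the bend is in the same direction (towards $b$). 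Hence $\gamma$ is an extremity of every $a$-interval $P_w^a$ carrying the same nonempty type ($\utype$ if $b$ is above $a$, and $\dtype$ otherwise). By the coherence condition~(iii), any two such typed intervals are coherent at the endpoint $\gamma$, so their projections on $a$ intersect; this makes the projection of $X_{ab}$ on $a$ a clique, contradicting the hypothesis. Thus no vertex of $X$ has index disjoint from $\{a,b\}$, every vertex of $X$ has $a$ or $b$ in its index, and the case analysis of the previous paragraph finishes the proof.
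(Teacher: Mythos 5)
Your proposal is correct and follows essentially the same route as the paper's proof: apply Lemma~\ref{lem:notCliqueProj} to $X_{ab}$ on both rows to get $t_a$ and $t_b$, use the clique property to place the vertices with index meeting $\{a,b\}$ into the good graph of type~\ref{type:case1}, and rule out a vertex with index disjoint from $\{a,b\}$ via its unique column. Your geometric argument for that last step is just the contrapositive of the paper's observation that a non-clique projection forces $X_{ab}$ to use at least two columns, spelled out in slightly more detail.
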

\begin{proof}
	Let $X$ be a clique satisfying this property for rows $a$ and $b$. Let $X_a$ be the set of vertices of $X \setminus X_{ab}$ intersecting row $a$ and not row $b$, and $X_b$ be the set of vertices of $X \setminus X_{ab}$ intersecting row $b$ and not row $a$.
	
	First note that $X = X_a \cup X_b \cup X_{ab}$. Otherwise there would exist a vertex $w$ of type $(c,d)$ such that $\{c,d\} \cap \{ a,b\} = \emptyset$ in $X$. Since $w$ is complete to $X_{ab}$, $w$ would intersect all the vertices of $X_{ab}$ on their vertical part. But the projection graph of $X_{ab}$ on $a$ is not a clique, consequently $X_{ab}$ intersects at least two columns. Thus a vertex of $X_{ab}$ does not intersect the unique column of $w$, a contradiction.

	By Lemma~\ref{lem:notCliqueProj} applied to $X_{ab}$ on both $a$ and $b$, there exist two typed intervals $t_a$ and $t_b$ on rows $a$ and $b$ such that every vertex of $X_{ab}$ intersects both $t_a$ and $t_b$, and such that vertices of $X_a$ contain $t_a$ and vertices of $X_b$ contain $t_b$. So $X$ is included in a good graph of type~\ref{type:case1} defined by $t_a$ on $a$ and $t_b$ on $b$.
\end{proof}

\begin{lemma}\label{lem:1clique}
	Let $G$ be a $B_2$-EPG graph containing only Z-vertices, and $X$ be a clique of $G$. If there are two rows $a$ and $b$ such that the projection graph of $X_{ab}$ on $a$ is not a clique, then there is a good graph containing $X$.
\end{lemma}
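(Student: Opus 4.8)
The plan is to reduce to the situation left open by Lemma~\ref{lem:0clique} and then build a good graph of type~\ref{type:case1}. First I would dispose of the easy case: since the projection of $X_{ab}$ on $a$ is not a clique, if the projection on $b$ is also not a clique then Lemma~\ref{lem:0clique} applies verbatim and we are done. So I would assume from now on that the projection of $X_{ab}$ on $b$ \emph{is} a clique. Because the projection of $X_{ab}$ on $a$ is not a clique, the vertices of $X_{ab}$ intersect at least two columns, and the column argument already used in Lemma~\ref{lem:0clique} shows that no vertex of $X$ can have an index disjoint from $\{a,b\}$ (a Z-vertex has a single vertical segment, so it could meet all of $X_{ab}$ only through one common column). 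Hence $X = X_a' \cup X_b' \cup X_{ab}$, where $X_a'$ is the set of vertices of index $\{a,c\}$ with $c\neq b$, $X_b'$ the set of vertices of index $\{b,c\}$ with $c \neq a$, and $X_{ab}$ as usual.

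Next I would produce the two typed intervals defining the good graph. Applying Lemma~\ref{lem:notCliqueProj} to $X_{ab}$ on row $a$ gives a proper typed interval $t_a$ such that every vertex of $X_{ab}$ intersects $t_a$ and, crucially, every vertex of index $\{a\}$ or $\{a,c\}$ with $c\neq a,b$ that is complete to $X_{ab}$ contains $t_a$; in particular every vertex of $X_a'$ contains $t_a$. The remaining vertices all use row $b$, so set $Y_b = X_{ab}\cup X_b'$. If the projection of $Y_b$ on $b$ is a clique, I would apply Lemma~\ref{lem:cliqueProj} to $Y_b$ on row $b$ to obtain a proper typed interval $t_b$ contained in every vertex of $Y_b$. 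The good graph of type~\ref{type:case1} attached to $(a,b,t_a,t_b)$ then contains every vertex of $X_a'$ (they contain $t_a$) and every vertex of $Y_b$ (they contain $t_b$), hence all of $X$, which is exactly what is required.

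The hard part is guaranteeing that $Y_b$ really does project to a clique on row $b$. If it does not, there are two vertices $p,q\in Y_b$ with disjoint $b$-projections that are nevertheless adjacent in $X$; by Lemma~\ref{lem:proj} they must share a second common row $d$, and since the projection of $X_{ab}$ on $b$ is a clique we necessarily have $d\neq a$, so $p,q\in X_{bd}$ and the projection of $X_{bd}$ on $b$ fails to be a clique. I would handle this by choosing the pair of rows extremally rather than arbitrarily: among all pairs witnessing the hypothesis I would select one whose ``clique side'' $b$ has the property that the \emph{whole} set of vertices using $b$ projects to a clique on $b$. To see such a choice exists, I would follow the chain of clique-side rows $a\to b\to d\to\cdots$ produced by the obstruction above; at each step, if the new pair has both projections non-clique we conclude immediately by Lemma~\ref{lem:0clique}, and otherwise we continue. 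The delicate point, and the step I expect to require the most care, is to certify that this selection process terminates with a genuinely ``clean'' row $b$ (equivalently, that the chain cannot cycle), for instance by an extremal choice of $b$ or an induction on the number of rows intersected by $X$. Once such a $b$ is fixed, the construction of the previous paragraph goes through unchanged.
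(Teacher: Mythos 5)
Your opening moves match the paper's proof: reduce to the case where the projection of $X_{ab}$ on $b$ is a clique, use the two-column argument to show every vertex of $X$ meets row $a$ or row $b$, and extract $t_a$ from Lemma~\ref{lem:notCliqueProj} applied to $X_{ab}$ on $a$. The case where $Y_b=X_{ab}\cup X_b'$ projects to a clique on $b$ is also handled the same way. But the step you yourself flag as delicate is a genuine gap, and it is precisely the case that carries the weight of the lemma. You propose to avoid the situation where $Y_b$ is not a clique on $b$ by re-selecting the pair of rows ``extremally'' and following a chain of clique-side rows $a\to b\to d\to\cdots$ until a row $r$ is found on which \emph{all} vertices of $X$ using $r$ project to a clique. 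You give no argument that such a row exists. The chain cannot immediately backtrack (if $X_{bd}$ is a clique on $d$, the obstruction on $d$ must come from some $X_{de}$ with $e\neq b$), but nothing you say excludes a longer cycle such as $b\to d\to e\to b$ in which each $X_{a_ia_{i+1}}$ is a clique on $a_{i+1}$ and not on $a_i$; in that configuration no row is ``clean'' and your process never terminates with a type~\ref{type:case1} witness. Since you never invoke good graphs of type~\ref{type:case2}, your proof has no fallback for this situation, and the lemma would remain unproved exactly where it is hardest.

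The paper resolves this case without any extremal or inductive choice: when $Y_b$ is not a clique on $b$, it takes two vertices $u,v\in X_b$ with non-intersecting t-projections on $b$, uses Lemma~\ref{lem:proj} to find their common second row $c\neq a$, reduces (via Lemma~\ref{lem:0clique}) to the case where $X_{bc}$ is a clique on $c$, and then produces a \emph{third} typed interval $t_c$ from Lemma~\ref{lem:cliqueProj}. The clique is then covered as $X_a\cup\bigl((X_b\cup X_{ab})\setminus X_{bc}\bigr)\cup X_{bc}$, where the first part contains $t_a$, the second contains $t_b$, and the third intersects $t_b$ and contains $t_c$ --- exactly the definition of a good graph of type~\ref{type:case2}. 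This is why type~\ref{type:case2} good graphs were introduced at all; a proof that only ever produces type~\ref{type:case1} good graphs would need to establish your termination claim, which is the missing (and doubtful) ingredient. To repair your argument, abandon the chain and handle the non-clique case for $Y_b$ directly with a third interval as above.
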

\begin{proof}
	Let $X$ be a clique satisfying this property for rows $a$ and $b$. We can assume that the projection graph of $X_{ab}$ on $b$ is a clique since otherwise we can apply Lemma~\ref{lem:0clique}. Let $X_a$ be the set of vertices of $X \setminus X_{ab}$ intersecting row $a$ and not row $b$, and $X_b$ be the set of vertices of $X \setminus X_{ab}$ intersecting row $b$ and not row $a$.
	
	First note that $X = X_a \cup X_b \cup X_{ab}$. Otherwise there would exist a vertex $w$ of index $\{c,d\}$ such that $\{c,d\} \cap \{ a,b\} = \emptyset$ in $X$. Since $w$ is complete to $X_{ab}$, $w$ would intersect all the vertices of $X_{ab}$ on their vertical part. But the projection graph of $X_{ab}$ on $a$ is not a clique, consequently $X_{ab}$ intersects at least two columns. Thus a vertex of $X_{ab}$ does not intersect the unique column of $w$, a contradiction.
	
	Suppose first that the projection graph of $X_b \cup X_{ab}$ on $b$ is a clique. By Lemma~\ref{lem:cliqueProj} applied to $X_b$ on $b$, there exists a proper typed interval $t_b$ on row $b$ such that every vertex of $X_b$ contains $t_b$. By Lemma~\ref{lem:notCliqueProj} applied to $X_{ab}$ on $a$, there exists a proper typed interval $t_a$ on row $a$ such that every vertex of $X_a$ contains $t_a$.
	So all the vertices of $X$ are contained in the good graph of type~\ref{type:case1} defined by $t_a$ and $t_b$.

	Suppose now that the projection graph of $X_b \cup X_{ab}$ on $b$ is not a clique. Then there are two vertices $u,v$ of $X_b$ such that the t-projections of $u$ and $v$ on $b$ do not intersect. By Lemma~\ref{lem:intersect}, $P_u$ and $P_v$ intersect another row $c$, and since the projection graph of $X_{ab}$ on $b$ is a clique, $c \neq a$. Since the projection graph of $X_{bc}$ on $b$ is not a clique, and $c \neq a$, we can assume that the projection graph of $X_{bc}$ on $c$ is a clique since otherwise Lemma~\ref{lem:0clique} can be applied on $X_{bc}$. So by Lemma~\ref{lem:cliqueProj} applied to $X_{bc}$ on $c$, there exists a typed interval $t_c$ contained in every vertex of $X_{bc}$. By Lemma~\ref{lem:notCliqueProj} applied to $X_{bc}$ on $b$ (resp. $X_{ab}$ on $a$), we know that there exists a typed interval $t_b$ (resp. $t_a$) satisfying the two conditions of Lemma~\ref{lem:notCliqueProj}. Now we divide $X_b \cup X_{ab}$ into two subclasses: $X_{bc}$ and $Y_b= (X_b \cup X_{ab}) \setminus X_{bc}$.
	
	We have $X = Y_b \cup X_{bc} \cup X_a$.
	Vertices of $X_a$ contain $t_a$. Vertices of $X_{bc}$ intersect $t_b$ and contain $t_c$. Vertices of $Y_b$ contain $t_b$. This proves that $X$ is included in the good graph of type~\ref{type:case2} defined by $t_a,t_b$ and $t_c$ on respectively rows $a,b,c$.
\end{proof}

We now have all the ingredients we need to prove Theorem~\ref{th:Zclique}.
By Lemma~\ref{lem:1clique}, we can assume that for all $a,b$, the projection graphs of $X_{ab}$ on $a$ and $b$ are cliques. As a by-product, if we denote by $X_a$  the set of vertices containing $a$ in their index, then the projection graph of $X_a$ on $a$ induces a clique for every row $a$. Indeed, suppose by contradiction that there is a row $a$, and two vertices $u,v$ whose index contains $a$, such that the t-projections of $u$ and $v$ on $a$ do not intersect. Then by Lemma~\ref{lem:intersect}, $u$ and $v$ intersect a common row $b \neq a$. But then, the projection graph of $X_{ab}$ on $a$ is not a clique, a contradiction.
Thus by Lemma~\ref{lem:cliqueProj}, for any row $a$ such that $X_a$ is not empty, there exists a proper typed interval $t_a$ such that the t-projection on $a$ of any vertex of $X_a$ contains $t_a$.

Assume that there are two rows $a$ and $b$ such that $X_{ab}$ intersects at least two columns. By Lemma~\ref{lem:2rows}, we can assume that $X$ intersects at least three rows. Let $w \in X \setminus X_{ab}$ be a vertex using a third row. As in the proof of Lemma~\ref{lem:0clique}, since $X_{ab}$ has at least two columns and $w$ is complete to $X_{ab}$, $w$ has index either $\{a,c\}$ or $\{b,c\}$ for $c \neq a,b$. Consequently, $w$ contains either $t_a$ or $t_b$, which proves that $X$ is contained in a good graph of type~\ref{type:case1}.

Assume now that for all pair of rows $(a,b)$, all the vertices of $X_{ab}$ pass through the same column. We can assume that $X$ intersects at least three columns, otherwise we could just rotate the representation of $G$ and apply Lemma~\ref{lem:2rows}. Let $u, v$ and $w$ be three vertices using three different columns. Let ${a,b}$ be the index of $u$. We can assume w.l.o.g. that the index of $v$ is $\{b,c\}$. There are two possible cases:
\begin{itemize}
	\item $w$ has index $\{b,d\}$ for some row $d$. Note that $d \neq a,c$ since otherwise this would contradict the fact that both $X_{ab}$ and $X_{bc}$ intersect only one column. If all vertices in $X$ contain $b$ in their index, then they all contain the typed interval $t_b$, and $X$ is contained in a good graph of type~\ref{type:case1}. Let $z$ be a vertex with $z \in X \setminus X_b$. Up to permutations of rows $a$, $c$ and $d$, we can assume that the row $c$ is below $a$ and $d$ is below $c$. Suppose that $b$ is over $c$. Then, since $z$ must intersect $u,v$ and $w$, necessarily the index of $z$ is of the form $\{s_1,s_2\}$ with $s_1, s_2 \in \{a,b,d\}$. Indeed, $z$ can intersect only one of $u,v,w$ on its vertical part. Additionally, the index of $z$ cannot be $\{c,d\}$ since otherwise $z$ cannot intersect the vertical part of $w$. (see Fig.~\ref{fig:zcliqueproof}). So necessarily the index of $z$ contains $a$. This proves that every vertex of $X$ contains either $t_b$ or $t_a$, and as a consequence $X$ is included in a good graph of type~\ref{type:case1}.
	A similar argument shows that if $b$ is below $c$, then $z$ cannot be of index $\{a,c\}$, and the index of $z$ contains $d$.
	
				\begin{figure}
					\centering
					\begin{tikzpicture}[scale=0.5]
					\draw[thick] (0,0) -- (5,0) -- (5,3) -- node[midway, above] {$u$} (13,3) ; 
					\draw[thick] (2,-0.2) -- (6,-0.2) -- (6,-2) -- node[midway, above] {$v$} (12,-2) ; 
					\draw[thick] (-1,-0.4) -- (4,-0.4) -- (4,-5) -- node[midway, below] {$w$} (10,-5) ; 
					\draw[line width= 2.5pt, gray!50!white] (9,-2.2) -- (5,-2.2) -- node[midway, right, black] {$z$} (5,-4.8) -- (8,-4.8) ;
					\node (ra) at (17,3) {$a$}; 
					\node (rb) at (17,-0.2) {$b$}; 
					\node (rc) at (17,-2) {$c$}; 
					\node (rd) at (17,-5) {$d$}; 
					\draw[dotted] (0,3) -- (ra) ;
					\draw[dotted] (0,-0.2) -- (rb) ;
					\draw[dotted] (0,-2) -- (rc) ;
					\draw[dotted] (0,-5) -- (rd) ;
					\end{tikzpicture}
					\caption{\label{fig:zcliqueproof} A vertex $z$ with index $\{c,d\}$ cannot intersect $u$ since the vertical part of $u$ is not between rows $c$ and $d$.}
				\end{figure}
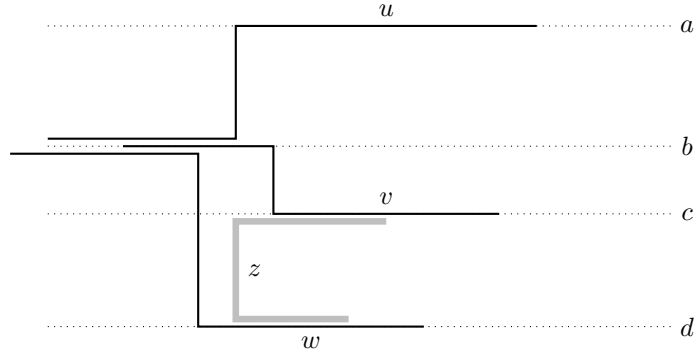

	\item $w$ does not contain $b$ in its index, then $w$ has necessarily index $\{a,c\}$ in order to intersect both $u$ and $v$. If there are two rows, say for example $a$ and $b$, such that every vertex of $X$ has an index containing either $a$ or $b$, then every vertex of $X$ contains either $t_a$ or $t_b$. This implies that $X$ is contained in a good graph of type~\ref{type:case1}. Suppose that there are no two rows such that the index of any vertex contains one of the two rows. Suppose w.l.o.g.. that row $b$ is below $a$, and $c$ below $b$. There is a vertex $z_c$ whose index does not contain $a$ or $b$. Then necessarily, the index of $z_c$ is $\{c, d_c\}$ for a certain row $d_c$. Additionally, $z_c$ must intersect the vertex $u$ of index $\{a,b\}$ on its vertical part. This implies that $b$ is below $d_c$. By a similar argument, there is a vertex $z_a$ of index $\{a,d_a\}$ with $d_a \neq b,c$. Then $z_a$ must intersect $v$ on its vertical part, and $d_a$ is below $b$. In particular, $d_a$ is different from $d_c$. This implies that $z_a$ and $z_c$ do not have a row or a column in common, and thus do not intersect. Since $X$ is a clique, this case is not possible.
	
	\begin{figure}
		\centering
		\begin{tikzpicture}[scale=0.5]
		\draw[thick] (0,0) -- ++ (5,0) -- node[midway, left] {$v$}  ++ (0,3) -- ++ (4,0) ;
		\draw[thick] ++ (6,3.2) -- ++ (2,0) -- node[midway, left] {$u$} ++ (0,2) -- ++ (7,0) ;
		\draw[thick] ++ (1,-0.2) -- ++ (10,0) -- node[midway, right] {$w$} ++ (0,5.2) -- ++ (3,0) ;
		
		\draw[line width=2pt, gray!50!white] ++ (1,-0.4) -- ++ (7.2,0) -- node[midway, right, black] {$z_c$} ++ (0,4) -- ++ (2,0) ;
		\draw[line width=2pt, gray!50!white] ++ (10,5.4) -- ++ (-5.2,0) -- node[midway, right, black] {$z_a$} ++ (0,-3) -- ++ (-2,0) ;
		
		\node (ra) at (20,5.2) {$a$} ;
		\node (rb) at (20,3) {$b$} ;
		\node (rc) at (20,-0.2) {$c$} ;
		
		\node (rda) at (-2,2.4) {$d_a$} ;
		\node (rdc) at (15,3.6) {$d_c$} ;
		
		\draw[dotted] (ra) -- ++(-17,0) ;
		\draw[dotted] (rb) -- ++(-17,0) ;
		\draw[dotted] (rc) -- ++(-17,0) ;
		
		\draw[dotted] (rdc) -- ++(-5,0) ;
		\draw[dotted] (rda) -- ++(5,0) ;
		
		\end{tikzpicture}
		\caption{\label{fig:zcliqueproof2} The vertices $z_a$ and $z_c$ cannot be adjacent since their paths do not have a row or a column in common.}
	\end{figure}
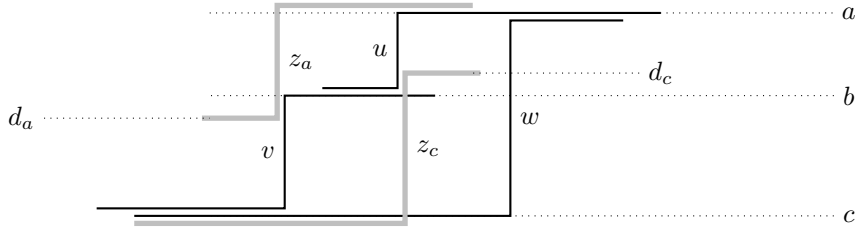

\end{itemize}

\subsection{General $B_2$-EPG graphs}
\label{sec:gen}

In Section~\ref{sec:Zvertices}, we have seen how to compute a maximum clique in a graph containing only Z-vertices. This section is devoted to prove that we can actually separate the graph in order to assume the graph only contains Z-vertices or U-vertices. We start by proving two lemmas showing that the existence of U-vertices puts some constraints on the Z-vertices that can be appear in a clique. We will then use these two lemmas to prove our main theorem.

\begin{lemma}
	\label{lem:atmost3rows}
	Let $G$ be a $B_2$-EPG graph with a representation, and $X$ be a clique of $G$, then:
	\begin{itemize}
		\item either there are $3$ rows intersecting all the U-vertices of $X$
		\item or there are three columns intersecting all the Z-vertices of $X$.
	\end{itemize}
\end{lemma}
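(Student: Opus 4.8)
The plan is to reduce the geometric intersection requirements inside the clique $X$ to purely combinatorial conditions on the single rows and single columns that the vertices occupy, and then exploit the fact that the U-vertices, and dually the Z-vertices, already form cliques among themselves.

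First I would record the three elementary adjacency conditions. Write $r_u$ for the unique row of a U-vertex $u$ and $C_u$ for its pair of columns; write $c_z$ for the unique column of a Z-vertex $z$ and $R_z$ for its pair of rows. Since a shared grid edge is either horizontal (forcing a common row) or vertical (forcing a common column), adjacency forces: for a U-vertex $u$ and a Z-vertex $z$, $r_u \in R_z$ or $c_z \in C_u$; for two U-vertices on distinct rows, a common column; and for two Z-vertices on distinct columns, a common row. These are only necessary conditions, since the actual overlaps of the segments are irrelevant for the bound, and this is exactly what makes the argument combinatorial.

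Then I would argue by contradiction, assuming that no three rows meet all U-vertices and no three columns meet all Z-vertices, so that one may select four U-vertices $u_1,\dots,u_4$ on pairwise distinct rows and four Z-vertices $z_1,\dots,z_4$ on pairwise distinct columns. The four column-pairs $C_{u_1},\dots,C_{u_4}$ pairwise intersect; an elementary fact on intersecting families of $2$-element sets shows that either they share a common column $c^\ast$, or they are altogether contained in a set of at most three columns. In the latter case I finish immediately: each $z_j$ meets at most two of the $u_i$ through a common row (as $R_{z_j}$ has size two), hence meets the remaining ones through a common column, so $c_{z_j}$ lies in that three-column set, contradicting that at least four columns meet the Z-vertices. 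By the symmetric argument applied to the row-pairs $R_{z_1},\dots,R_{z_4}$, I may also assume these share a common row $r^\ast$, as otherwise the U-vertices would be confined to three rows.

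There remains the case where the four U-vertices share a column $c^\ast$ and the four Z-vertices share a row $r^\ast$; this \emph{double star} is the main obstacle, since here neither family collapses into three lines on its own. To handle it I would pick two U-vertices $u_i,u_j$ whose rows are distinct and both different from $r^\ast$, which is possible because at most one of the four distinct U-rows equals $r^\ast$. For any Z-vertex $z$ its rows are $\{r^\ast, r'\}$, so $r_{u_i}\in R_z$ forces $r_{u_i}=r'$, and likewise for $u_j$; as $r_{u_i}\neq r_{u_j}$ these cannot both hold, so $z$ meets $u_i$ or $u_j$ through a column and $c_z\in C_{u_i}\cup C_{u_j}=\{c^\ast, c'_i, c'_j\}$. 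Thus all Z-columns lie in a set of size three, the final contradiction. I expect the only delicate points to be the clean formulation of the intersecting-$2$-sets dichotomy and the bookkeeping that pins the bound to exactly three rather than four.
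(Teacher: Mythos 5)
Your proof is correct, and its skeleton coincides with the paper's: assuming four U-vertices $u_1,\dots,u_4$ of $X$ on pairwise distinct rows, their column-pairs pairwise intersect, and the dichotomy for pairwise-intersecting $2$-sets (all contained in three columns, or sharing a common column) drives the case analysis; the ``at most three columns'' case is settled in both proofs by the same counting argument (a Z-vertex can share a row with at most two of the four $u_i$, so it must share a column with one of them). Where you diverge is the common-column case. The paper stays entirely on the U-side: writing $C_{u_1}=\{\alpha,\beta\}$, $C_{u_2}=\{\alpha,\gamma\}$ with $\beta\neq\gamma$, $C_{u_3}=\{\alpha,\delta\}$, $C_{u_4}=\{\alpha,\tau\}$, it shows directly that $\{\alpha,\delta,\tau\}$ meets every Z-vertex of $X$, since a Z-vertex avoiding these three columns would have to lie on the rows of $u_3$ and $u_4$ and simultaneously on the two distinct columns $\beta$ and $\gamma$. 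You instead symmetrize: you also pick four Z-vertices on distinct columns, extract a common row $r^\ast$ for them by the dual dichotomy, and use two U-vertices off $r^\ast$ to trap the four Z-columns in a $3$-set. Both work; the paper's version is constructive (it exhibits three columns covering \emph{all} Z-vertices of $X$, not just the chosen four), which matches how the lemma is used downstream, while yours is a cleaner pure contradiction that makes the U/Z symmetry and the ``double star'' obstruction explicit. One small wording point: your claim that ``for any Z-vertex $z$ its rows are $\{r^\ast,r'\}$'' holds only for the four chosen $z_j$ (an arbitrary Z-vertex of $X$ need not contain $r^\ast$); that is all your contradiction needs, but the sentence should be restricted accordingly.
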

\begin{proof}
	Let $u_1, u_2, u_3$, and $u_4$ be four U-vertices of $X$ intersecting pairwise different rows. Let us prove that there are three columns containing every Z-vertex of $X$.
	
	First assume that there are three columns $\alpha$, $\beta$, $\gamma$ such that, the set of columns intersected by $u_i$ is in $\{\alpha,\beta,\gamma\}$ for every $i \leq 4$. Let us prove that these three columns intersect every Z-vertex of $X$. Assume by contradiction that there exists $v$ in $X$ that does not intersect $\alpha$, $\beta$ and $\gamma$. Then for every $i$, $P_v$ shares an edge with $P_{u_i}$ on a horizontal segment. Since all the $u_i$ have disjoint index, this would imply that $v$ intersects four different rows, a contradiction.
	
	So we can assume that $u_1,u_2,u_3,u_4$ intersect at least four columns. Let $\alpha$ and $\beta$ be the columns of $u_1$. We can assume w.l.o.g. that $u_2$ intersects the columns $\alpha$ and $\gamma$, with $\gamma \neq \alpha, \beta$. And that $u_3$ intersects a fourth column $\delta \neq \alpha, \beta, \gamma$. So both $u_3$ and $u_4$ must intersect $\alpha$ since they must intersect both $u_1$ and $u_2$. Let $\tau$ be the second column intersected by $u_4$. Then any Z-vertex of $X$ intersects one of $\alpha, \delta, \tau$. Indeed, suppose by contradiction that a Z-vertex $v$ of $X$ does not intersect one of these columns. Since $P_v$ does not intersect $P_{u_3}$ and $P_{u_4}$ on their vertical intervals, it shares an edge with $P_{u_3}$ and $P_{u_4}$ on their two horizontal parts. Since $u_1,u_2,u_3,u_4$ have pairwise different index, $P_v$ that intersect the row of $u_3$ and the row of $u_4$, share an edge with $P_{u_1}$ on the column $\beta$ and $P_{u_2}$ on the column $\gamma$ since $v$ does not intersect column $\alpha$. However, a Z-vertex intersects a single column, a contradiction.
\end{proof}

In Section~\ref{sec:typed}, we have introduced typed intervals. These typed interval defines intervals on a given row. In the following claim, we need two typed of typed interval: horizontal and vertical typed interval. An \emph{horizontal typed interval} is a typed interval as defined in Section~\ref{sec:typed}. A \emph{vertical typed interval} is a typed interval of the graph after a rotation, i.e. the graph where rows become columns and columns become rows.

\begin{lemma}
	\label{lem:Uintervals}
	Let $G$ be a $B_2$-EPG graph with its representation, and $X$ be a clique of $G$ containing only U-vertices with the same index $\{a\}$. There exists a set $S_t$ of at most three typed intervals such that:
	\begin{itemize}
		\item $S_t$ contains exactly one horizontal typed interval, and at most two vertical typed intervals,
		\item every vertex of $X$ contains all the typed intervals of $S_t$,
		\item a Z-vertex $u$ is complete to $X$ if and only if $u$ intersects one of the typed intervals of $S_t$.
	\end{itemize}
	%
	%
\end{lemma}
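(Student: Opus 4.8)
The plan is to separate the two ways a Z-vertex can be complete to $X$: through the common row $a$, or through a common column. Since every vertex of $X$ is a U-vertex with index $\{a\}$, two distinct vertices of $X$ have no common row other than $a$, so Lemma~\ref{lem:proj} tells us that adjacency inside $X$ is exactly intersection of the t-projections on $a$; hence $X$ is a clique on $a$. I would first apply Lemma~\ref{lem:cliqueProj} to $X$ on $a$ to obtain the (unique) horizontal typed interval $t_a = [x\alpha, y\beta]$ of $S_t$, contained in every vertex of $X$. Its second item already settles all Z-vertices whose index contains $a$: such a vertex has index $\{a\}$ or $\{a,c\}$ with $c\neq a$, so it is complete to $X$ if and only if it intersects $t_a$.

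It remains to deal with a Z-vertex $w$ whose index does not contain $a$. Such a $w$ has a single column $\delta$ and no horizontal edge on row $a$, hence it can be adjacent to a vertex $u\in X$ only by sharing a vertical edge on $\delta$; in particular $w$ is complete to $X$ only if $\delta$ is a column of \emph{every} vertex of $X$. The first key step is to show that such a common column must be $\alpha$ or $\beta$. Indeed a common column is an endpoint of $P_u^a$ for every $u\in X$, and using that $\alpha$ (resp. $\beta$) is the rightmost left (resp. leftmost right) extremity of the $a$-intervals of $X$, a short extremal argument forces any common left endpoint to equal $\alpha$ and any common right endpoint to equal $\beta$. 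This already bounds the number of vertical typed intervals by two.

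For each of $\alpha,\beta$ that is a common column I would build its vertical typed interval by rotating the grid so that columns become rows and applying Lemma~\ref{lem:cliqueProj} once more, now to the vertical segments of $X$ on that column. The point requiring care is that these segments need not pairwise overlap: two vertices of $X$ sharing the column $\alpha$ may be adjacent through row $a$ rather than through $\alpha$, and may even leave $(a,\alpha)$ in opposite vertical directions. However, when $\alpha<\beta$ every vertex of $X$ has $\alpha$ as its \emph{left} endpoint and therefore extends to the right of $\alpha$ on row $a$; in the rotated picture this means every corresponding projection bends the \emph{same} way at the shared grid point $(a,\alpha)$, so any two of them are coherent there and the rotated projection on $\alpha$ is a clique. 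Lemma~\ref{lem:cliqueProj} then yields a proper vertical typed interval $t_\alpha$ contained in all of $X$; since the rotated index of $w$ is exactly $\{\alpha\}$, its second item gives that $w$ (with column $\alpha$) is complete to $X$ if and only if it intersects $t_\alpha$. The same construction on $\beta$ produces $t_\beta$ when $\beta$ is common, and the degenerate case $\alpha=\beta$ collapses the two columns into one and is handled identically.

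Finally I would set $S_t=\{t_a\}\cup\{t_\alpha,t_\beta\}$, keeping only the vertical intervals whose column is common, and check both directions of the equivalence. The forward direction is uniform and short: every vertex of $X$ contains every interval of $S_t$, so if a Z-vertex intersects some $t\in S_t$, then Lemma~\ref{lem:intersect} (applied on $a$, or on the rotated row for a vertical interval) makes it adjacent to all of $X$, hence complete; this also guarantees that a non-complete vertex intersects no interval of $S_t$. The backward direction is precisely the case analysis above: a complete Z-vertex either has $a$ in its index, and then intersects $t_a$, or attaches through a common column $\delta\in\{\alpha,\beta\}$, and then intersects $t_\delta$. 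The main obstacle is the vertical construction of the third paragraph, namely proving that the only relevant columns are $\alpha$ and $\beta$, and that despite non-overlapping segments and mixed up/down directions the rotated projections still form cliques, so that Lemma~\ref{lem:cliqueProj} applies and returns \emph{proper} vertical intervals.
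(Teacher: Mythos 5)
Your proof is correct and follows essentially the same route as the paper: Lemma~\ref{lem:cliqueProj} on row $a$ gives the horizontal interval, a Z-vertex whose index avoids $a$ can only attach through a column common to all of $X$ (of which there are at most two, and which you rightly pin down as the endpoints $\alpha,\beta$ of $t_a$), and Lemma~\ref{lem:cliqueProj} applied after rotation on each common column gives the vertical intervals. The one soft spot is the degenerate case $\alpha=\beta$, which is not quite ``handled identically'': there the vertices of $X$ need not all leave $(a,\delta)$ in the same horizontal direction, and to see that the rotated projection on $\delta$ is still a clique you need the dichotomy the paper uses, namely that two vertices bending in opposite vertical directions at $(a,\delta)$ must extend in the same horizontal direction on row $a$ (otherwise they would share no grid edge and could not both lie in the clique $X$).
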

\begin{proof}
	Since $X$ is a clique of $G$ and $X$ only contains U-vertices of index $\{a\}$, Lemma~\ref{lem:proj} ensures that the projection graph of $X$ on $a$ is a clique. By Lemma~\ref{lem:cliqueProj} applied to $X$ on $a$, there is a typed interval $t$ such that every vertex of $X$ contains $t$, and, if $u$ is a vertex containing $a$ in its index, and $u$ is complete to $X$, then $u$ must intersect $t$. The typed interval $t$ is the unique horizontal typed interval of~$S_t$.
	
	Suppose that there is a column $\alpha$, such that every vertex of $X$ intersects $\alpha$. Since all the vertices of $X$ intersect the same row $a$ and $X$ is a clique, the projection graph of $X$ on the column $\alpha$ is a clique.
	Indeed, since all the vertices of $X$ intersect column $\alpha$ and row $a$, all of them must bend on the point $(a,\alpha)$. Either they all bend on the same direction on column $\alpha$, say upwards, and then they all contain the edge of the column $\alpha$ between $a$ and $a+1$, and the projection graph is a clique. Or, some vertices of $X$ bend upwards and other downwards on $(a,\alpha)$. Since $X$ is a clique, they all come from the same direction on row $a$ and then their t-projections on $\alpha$ pairwise intersect.
	
	By Lemma~\ref{lem:cliqueProj} applied to $X$ on column $\alpha$, there exists a vertical typed interval $t_\alpha$ satisfying both properties of Lemma~\ref{lem:cliqueProj}.
	Since every U-vertex intersects two columns, there are at most two columns $\alpha,\beta$ for which every vertex of $X$ intersects these columns. Let $S_t$ be the set composed of $t$ and the typed intervals $t_\alpha$ and $t_\beta$ if they exist.
	
	Let us prove that $S_t$ satisfies the conclusion of the lemma. By construction $S_t$ contains exactly one horizontal typed interval and at most two vertical typed intervals. By definitions of $t$ and $t_\alpha, t_\beta$, every vertex $v$ of $X$ contains the typed intervals in $S_t$. Let us finally show the last point.
	Let $u$ be a Z-vertex. If $u$ intersects a typed interval in $S_t$, then by Lemma~\ref{lem:intersect}, $u$ is complete to $X$. Conversely, suppose that $u$ is complete to $X$. If $u$ contains $a$ in its index, then Lemma~\ref{lem:cliqueProj} ensures that $u$ intersects $t$ since vertices of $X$ all have index $\{a\}$. Assume now that the index of $u$ does not contain $a$. So $u$ intersects al the vertices of $X$ on its unique column. Let $\gamma$ be the unique column intersected by $u$. All the vertices of $X$ must intersect $\gamma$ since otherwise $u$ cannot be complete to $X$. Then $\gamma \in \{ \alpha, \beta\}$, and w.l.o.g., we can assume $\gamma = \alpha$. Then Lemma~\ref{lem:cliqueProj} ensures that $u$ intersects $t_\alpha$ since the unique column of $u$ is $\alpha$.
\end{proof}

The two previous lemmas are the main ingredients to prove that a maximum clique in $B_2$-EPG graphs can be computed in polynomial time. The idea of the algorithm is, using Lemma~\ref{lem:Uintervals}, to guess some typed intervals contained in the U-vertices of the clique. Lemma~\ref{lem:atmost3rows} ensures that we do not have to guess too many intervals. Once we have guessed these intervals, we are left with a subgraph which is actually the join of two subgraphs, one with only Z-vertices, and another with only U-vertices. Then the maximum clique is obtained by applying Theorem~\ref{th:Zclique} to each of the components.

\begin{theorem}
	Given a $B_2$-EPG graph $G$ with its representation, there is a polynomial time algorithm computing the maximum clique of $G$.
\end{theorem}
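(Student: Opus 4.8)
The plan is to reduce to the single-type situation handled by Theorem~\ref{th:Zclique} by guessing a small family of typed intervals that pin down the U-vertices of an optimal clique. Let $X$ be a maximum clique and write $X = X_U \cup X_Z$, where $X_U$ (resp. $X_Z$) is the set of its U-vertices (resp. Z-vertices). By Lemma~\ref{lem:atmost3rows}, either three rows meet every U-vertex of $X$, or three columns meet every Z-vertex of $X$; since the two cases are exchanged by a rotation of the grid, I would run the procedure below on $G$ and on its rotation and return the better value. So assume the U-vertices of $X$ lie on at most three rows $a_1,a_2,a_3$. As a U-vertex has a single row in its index, $X_U$ splits into the at most three cliques $X_U^{a_i}$ of U-vertices with index $\{a_i\}$.

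First I would apply Lemma~\ref{lem:Uintervals} to each non-empty $X_U^{a_i}$ to obtain a family $S_t^{a_i}$ of at most three typed intervals (one horizontal on $a_i$, at most two vertical) such that every vertex of $X_U^{a_i}$ contains all intervals of $S_t^{a_i}$, while a Z-vertex is complete to $X_U^{a_i}$ if and only if it intersects some interval of $S_t^{a_i}$. Since the algorithm does not know $X$, it enumerates all admissible choices of these families: by the important-point argument of Lemma~\ref{lem:goodGraphs} there are only $\mathcal{O}(n^2)$ typed intervals, hence at most $\mathcal{O}(n^{18})$ guesses for the (at most nine) intervals, grouped into the at most three rows.

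For a fixed guess $\{S_t^{a_i}\}_i$ I would let $H_Z$ be the subgraph induced by the Z-vertices that intersect at least one interval of $S_t^{a_i}$ for \emph{every} $i$, and $H_U$ be the subgraph induced by the U-vertices whose index is some $a_i$ and which contain all intervals of $S_t^{a_i}$. The key structural claim, proved directly from Lemma~\ref{lem:intersect} in its horizontal and vertical forms, is that $H_Z$ and $H_U$ form a join: any $z \in H_Z$ and any $u \in H_U$ of index $\{a_i\}$ share a typed interval $t \in S_t^{a_i}$ that $u$ contains and $z$ intersects, so $zu$ is an edge of $G$. Consequently a maximum clique of the induced subgraph on $H_Z \cup H_U$ is the disjoint union of a maximum clique of $H_Z$ and a maximum clique of $H_U$; as $H_Z$ contains only Z-vertices and $H_U$ only U-vertices, both sizes are computed in polynomial time by Theorem~\ref{th:Zclique} (the latter after rotation). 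The algorithm returns the largest value over all guesses and both orientations.

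It remains to argue correctness. Every value produced corresponds to an actual clique of $G$, namely a clique of $H_Z$ together with a clique of $H_U$, which is a clique by the join property; hence the output never exceeds $\omega(G)$. Conversely, for the guess equal to the families $S_t^{a_i}$ attached to the true optimum $X$, Lemma~\ref{lem:Uintervals} yields $X_Z \subseteq H_Z$ and $X_U \subseteq H_U$, so the two computed maximum cliques have total size at least $|X_Z| + |X_U| = |X|$, and the returned value equals $\omega(G)$. The delicate point, and the step I expect to require the most care, is the join property: one must check that ``$z$ intersects some interval of $S_t^{a_i}$'' always produces an edge to every $u \in H_U$ of index $a_i$, treating separately the horizontal interval (where both $z$ and $u$ contain $a_i$ in their index) and the vertical intervals (where both meet the corresponding column), and one must make sure that guessing only the intervals rather than the cliques themselves does not create spurious large cliques, which is exactly what the ``only if'' direction of Lemma~\ref{lem:Uintervals} rules out.
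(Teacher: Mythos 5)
Your proposal is correct and follows essentially the same route as the paper: guess the at most three families of typed intervals given by Lemma~\ref{lem:atmost3rows} and Lemma~\ref{lem:Uintervals}, observe that the resulting subgraph is a join of a Z-part and a U-part via Lemma~\ref{lem:intersect}, and solve each side with Theorem~\ref{th:Zclique}. The only differences are cosmetic (you enumerate both orientations explicitly and settle for an $\mathcal{O}(n^{18})$ guess count where the paper refines it to $\mathcal{O}(n^{9})$), which does not affect correctness.
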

\begin{proof}
	In the rest of this proof, $S_i$ will denote a set of typed intervals. A vertex $u$ \emph{contains} $S_i$ if $u$ contains all the typed intervals of $S_i$. And $u$ \emph{intersects} $S_i$ if $u$ intersects one of its typed intervals. Given $k$ of these sets $S_1, S_2, \ldots S_k$, we denote by $G(S_1, S_2, \ldots S_k)$ the subgraph induced by the set of U-vertices containing one of the $S_i$ and the set of Z-vertices intersecting all of the $S_i$-s.
	
	Let $X$ be a clique of $G$. Let us show that there are at most three sets $S_1,\ldots, S_k$ with $k \leq 3$ such that $X$ is contained in $G(S_1, \ldots, S_k)$. Free to rotate the representation by 90\degree if needed, Lemma~\ref{lem:atmost3rows} ensures that there are at most three rows intersecting all U-vertices. Let us denote by $a_1,a_2,a_3$ these (at most) three rows. By Lemma~\ref{lem:Uintervals} applied on each row $a_i$, there exists a set of typed intervals $S_i$ such that every U-vertex of $X$ intersecting $a_i$ contains $S_i$, and every Z-vertex of $X$ intersects $S_i$. This implies that $X$ is included in $G(S_1, \ldots, S_k)$ with $k \leq 3$.
	
	Let us now describe the algorithm that computes a maximum clique in $G$: guess the sets $S_1, \ldots S_k$ by trying all possibilities, and then compute the maximum clique of $G(S_1, \ldots S_k)$. Reusing the argument in the proof of Lemma~\ref{lem:goodGraphs}, we know that there are at most $O(n^2)$ typed intervals. Since each set $S_i$ is composed of three typed intervals, this give at most $O(n^6)$ possibilities. By looking more precisely at the proof of Lemma~\ref{lem:Uintervals} we can see that the vertical typed intervals share a common endpoint with the horizontal one. This means that actually you only have $O(n^3)$ possibilities to look at for a set $S_i$. Now, since we need at most three of these sets, there are at most $O(n^9)$ possible choices for the sets $S_1, \ldots S_k$, $k \leq 3$.
	
	To complete the proof of the theorem, we only need to prove that computing a maximum clique in a subgraph $G(S_1, \ldots, S_k)$ can be done in polynomial time. Fix $S_1, \ldots S_k$ sets of typed intervals, and denote $H = G(S_1, \ldots, S_k)$. Let $H_Z$ (resp. $H_U$) be the subgraph of $H$ induced by the Z-vertices (resp. U-vertices). Then $H$ is the join of $H_U$ and $H_Z$. Indeed, let $u$ a U-vertex of $H$, and $v$ a Z-vertex of $H$. By construction, there is a set $S_i$ such that $u$ contains $S_i$. Since $v$ intersects $S_i$, by Lemma~\ref{lem:intersect} $u$ and $v$ are adjacent in $H$. By Theorem~\ref{th:Zclique}, a maximum clique of $H_U$ and a maximum clique of $H_Z$ can be computed in polynomial time. This implies that a maximum clique of $H$ can be computed in polynomial time.
\end{proof}

\section{Colorings and $\chi$-boundedness}\label{sec:coloring}

We denote by $\chi(G)$ the \emph{chromatic number} of $G$, i.e. the minimum number of colors needed to properly color the graph $G$. And we denote by $\omega(G)$ the maximum size of a clique of $G$. The following lemma upper bounds the number of edges in a $B_k$-EPG.
A similar bound on the number of edges was proposed by Gyarf\'as in ~\cite{Gyarfas85} for $k$-interval graphs. We nevertheless give the proof for completeness.

\begin{lemma}\label{lem:edges}
Let $G$ be a $B_k$-EPG graph on $n$ vertices. There are at most $(k+1)(\omega(G) - 1)n$ edges in $G$.
\end{lemma}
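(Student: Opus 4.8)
The plan is to bound the number of edges by a clever counting argument based on a suitable ordering of the vertices, exactly in the spirit of degeneracy bounds for intersection graphs. The key quantity to control is, for each vertex, the number of neighbors that come ``before'' it in the ordering; if I can show every vertex has at most $(k+1)(\omega(G)-1)$ such back-neighbors, then summing over all $n$ vertices gives at most $(k+1)(\omega(G)-1)n$ edges.

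First I would set up the ordering. Each vertex $u$ is a path $P_u$ with at most $k$ bends, hence at most $k+1$ maximal straight (horizontal or vertical) segments. I would assign to each such segment one of its two endpoints---say the endpoint that is lexicographically smallest, or more naturally the leftmost point of a horizontal segment and the bottommost point of a vertical one---and then order the vertices according to these distinguished points, for instance by scanning the grid and processing a vertex when its ``last'' distinguished segment-endpoint is reached. The essential feature I want from the ordering is that when I process a vertex $u$, every neighbor $v$ of $u$ that precedes $u$ must share a grid edge with $P_u$ that lies on one of the $k+1$ segments of $P_u$, and moreover $v$'s relevant endpoint already lies within the already-scanned region of that segment.

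The heart of the argument is then local: fix a vertex $u$ and one of its $k+1$ segments $s$. I claim the back-neighbors of $u$ whose intersection with $P_u$ occurs on $s$, together with a specific edge of $s$, form a clique among themselves (because they all contain a common grid edge near the scanned endpoint of $s$, so their paths pairwise share that edge). Since such a clique together with $u$ has size at most $\omega(G)$, there are at most $\omega(G)-1$ back-neighbors attached to segment $s$. Summing over the at most $k+1$ segments of $P_u$ yields at most $(k+1)(\omega(G)-1)$ back-neighbors of $u$, and summing over the $n$ vertices gives the stated bound on $|E(G)|$.

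The main obstacle I anticipate is making the ``common edge forces a clique'' step fully rigorous: I need the ordering to guarantee that all back-neighbors attached to a single segment $s$ of $P_u$ genuinely pairwise intersect, not merely that each intersects $P_u$. The careful choice is to orient each segment and declare $u$ processed once all of its distinguished endpoints have been passed, so that any earlier neighbor $v$ sharing an edge on $s$ must contain the grid edge of $s$ incident to the scanning frontier; all such $v$ then share that single edge and are hence pairwise adjacent. Verifying that this frontier edge can always be chosen consistently---handling the corner cases where the intersection of $v$ with $s$ is a single shared edge at the very boundary---is the delicate bookkeeping, but it is routine once the ordering convention is pinned down.
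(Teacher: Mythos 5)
Your plan has a genuine gap at exactly the step you flag as ``delicate bookkeeping'': the claim that every back-neighbour of $u$ attached to a fixed segment $s$ of $P_u$ must contain the grid edge of $s$ at the scanning frontier. This fails for any ordering of the kind you describe, because a single linear order on the vertices cannot act as a left-to-right sweep simultaneously for all $k+1$ segments of a path, which live in different places on the grid. Concretely, in $B_1$-EPG take $u$ with horizontal arm on row $0$ from column $0$ to column $2m+2$ and vertical arm at column $2m+2$, and take $v_1,\dots,v_m$ where $v_i$ has horizontal arm from column $2i$ to $2i+1$ on row $0$ and vertical arm at column $2i+1$. Then $\omega=2$, every $v_i$ is adjacent to $u$, and the $v_i$ are pairwise non-adjacent; yet under your ``process a vertex at its last distinguished endpoint'' rule, $u$ is processed at $(0,2m+2)$, after every $v_i$, so $u$ acquires $m$ pairwise non-adjacent back-neighbours all attached to the same segment, and none of them contains the frontier edge between $(0,2m+1)$ and $(0,2m+2)$. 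Switching to the first distinguished endpoint, or to right endpoints, or to a column-major scan, merely relocates the failure: where a neighbour meets $s$ is simply not controlled by the two vertices' global processing times. Note also that your argument, if it worked, would show $B_k$-EPG graphs are $(k+1)(\omega-1)$-degenerate, which is strictly stronger than what the lemma needs and than what the paper itself derives (roughly twice that); that should already raise suspicion.

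The repair is to count edges rather than back-degrees, which is what the paper does, following Gyarf\'as. Cut each path into its at most $k+1$ maximal straight pieces, merge overlapping pieces of the same path lying on the same line into canonical intervals, and for each row and each column form the interval graph on the canonical intervals lying there. Each edge of $G$ maps injectively to an edge of one of these interval graphs, each such interval graph has clique number at most $\omega(G)$ and hence at most $(\omega(G)-1)$ times its number of vertices many edges, and the total number of canonical intervals over all rows and columns is at most $(k+1)n$; multiplying gives the bound. The sweep argument you had in mind is exactly what proves the interval-graph edge bound, but it must be run separately on each row and each column, not as one global vertex ordering; the price of this route is the factor of $2$ that appears only later, when the edge bound is converted back into a degeneracy statement.
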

\begin{proof}
Let $G$ be a $B_k$-EPG graph, and consider a representation of $G$. Let $q$ be the maximum number of distinct paths going through one edge of the grid. Then $q \leq \omega(G)$ since all the paths sharing a common edge of the grid forms a clique of $G$. Let us prove that $G$ has at most $(k+1)(q-1)$ edges. 
	
The path $P_u$ of $u$ can be decomposed into at most $(k+1)$ intervals where an interval is a maximum subpath of $P_u$ between two bends or between the beginning or the end of $P_u$ and the first or last bend. Each interval is then contained on a single row or on a single column. Note that two intervals of $P_u$ can be included on the same row if $k$ is large enough. If two intervals of $P_u$ on a row intersect, we define replace both intervals with a single interval that is the union of the two intervals. This operation is called a merging. 
Given a vertex $u$, the canonical intervals of $u$ are the intervals of $u$ where we merged all the intervals that have an edge in common.
	
Let $G_i=(V_i,E_i)$ be the interval graph where the vertices of $G_i$ are the canonical intervals included in the $i$th row of the representation of $G$. And there is an edge between two canonical intervals if they intersect non trivially. Note that each path $P_u$ might contribute to several vertices in $G_i$ but all these intervals have to be disjoint intervals are canonical.
Note that the graph $G_i$ defines an interval graph. Similarly we define $G'_j$ for each column of the representation. We denote by $G_s$ the graph defined as the disjoint union of all the $G_i$s and of all the $G_j'$s. We claim the following:
	
\begin{claim}
	$|E(G_s)| \geq |E(G)|$ and $\omega(G_s)=q$.
\end{claim}
\begin{proof}
Let $uv$ be an edge of $G$. Then $P_u$ and $P_v$ share an edge and then there is a canonical interval $i$ of $P_u$ and a canonical interval $j$ of $P_v$ such that $i$ and $j$ intersect. We can associate to $uv$ the edge $ij$. This function from $E(G)$ to $E(G_s)$ is clearly injective and then $|E(G_s)| \geq |E(G)|$ .
Assume by contradiction that there is a clique of size more than $q$ in $G_i$. Then there exists an edge of row $i$ containing at least $q+1$ canonical intervals. Since there are at most $q$ different vertices of $G$ going through the same edge, two intervals are associated to the same vertex, a contradiction with the fact that the intervals are canonical.
\end{proof}
The remaining of the proof consists in evaluating the number of edges in $G'$. Now the graphs $G_i$ and $G'_i$ are interval graphs, consequently their number of edges satisfies: $|E(G_i)| \leq (q-1) |V(G_i)|.$ Since each path is composed of at most $(k+1)$ intervals, we get:
	
\begin{align*}
|E(G)| &\leq \sum_i |E(G_i)| + \sum_j |E(G'_j)| \\
&\leq (q-1) \left(\sum_i |V(G_i)| + \sum_j |V(G'_j)|\right) \\
&= (q-1) (k+1) n 
\end{align*}
	
\end{proof}

A graph is \emph{$k$-degenerate} if there is an ordering $v_1,\ldots,v_n$ of the vertices such that for every $i$, $|N(v_i) \cap \{ v_{i+1},\ldots,v_n \}| \leq k$. It is straightforward to see that $k$-degenerate implies $(k+1)$-colorable. Lemma~\ref{lem:edges} immediately implies the following:

\begin{corollary}
 Let $G$ be a $B_k$-EPG graph:
 \begin{itemize}
 \item The graph $G$ is $\Big(2(k+1) \omega-1\Big)$-degenerate.
  \item $ \chi(G) \leq 2(k+1) \omega(G)$.
  \item There is a polynomial time $2(k+1)$-approximation algorithm for the coloring problem without knowing the representation of $G$.
  \item Every graph of $B_k$-EPG contains a clique or a stable set of size at least $\sqrt{\frac{n}{2(k+1)}}$.
 \end{itemize}
\end{corollary}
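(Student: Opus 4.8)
The plan is to derive all four bullets from the edge bound of Lemma~\ref{lem:edges}, using only standard facts about degeneracy, greedy colouring, and a Ramsey-type product inequality; no new geometric argument is needed.

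First I would establish the degeneracy bound, which is the crux from which the other three items follow. The key observation is that the class $B_k$-EPG is hereditary: any induced subgraph $H$ of a $B_k$-EPG graph $G$ is itself $B_k$-EPG (simply restrict the representation of $G$ to the vertices of $H$, preserving all edge-intersections). Hence Lemma~\ref{lem:edges} applies to \emph{every} induced subgraph $H$, yielding $|E(H)| \leq (k+1)(\omega(H)-1)|V(H)| \leq (k+1)(\omega(G)-1)|V(H)|$, where the last inequality uses $\omega(H) \leq \omega(G)$. The average degree of $H$ is therefore at most $2(k+1)(\omega(G)-1)$, so $H$ has a vertex of degree at most $2(k+1)(\omega(G)-1) \leq 2(k+1)\omega(G)-1$. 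Since every induced subgraph admits such a low-degree vertex, iteratively removing a minimum-degree vertex produces an ordering witnessing that $G$ is $\big(2(k+1)\omega(G)-1\big)$-degenerate.

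The second and third bullets are then immediate. Colouring the vertices greedily in the reverse of this ordering, each vertex has at most $2(k+1)\omega(G)-1$ already-coloured neighbours when it is processed, so a free colour in $\{1,\ldots,2(k+1)\omega(G)\}$ always exists; this gives $\chi(G)\leq 2(k+1)\omega(G)$. For the algorithmic claim I would stress that both the elimination ordering (repeatedly deleting a minimum-degree vertex) and the greedy colouring depend only on the abstract graph $G$, never on the grid representation, and both run in polynomial time. Since the number of colours used is at most $2(k+1)\omega(G) \leq 2(k+1)\chi(G)$ (using $\omega(G)\leq\chi(G)$), this is a polynomial-time $2(k+1)$-approximation for colouring that needs no representation.

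Finally, for the Ramsey-type bullet I would combine the chromatic bound with the trivial inequality $n \leq \chi(G)\,\alpha(G)$ (the $\chi(G)$ colour classes are stable sets, each of size at most $\alpha(G)$), obtaining $n \leq 2(k+1)\,\omega(G)\,\alpha(G)$. Thus $\omega(G)\,\alpha(G) \geq n/(2(k+1))$, and since $\max(\omega(G),\alpha(G)) \geq \sqrt{\omega(G)\,\alpha(G)}$, the graph contains a clique or a stable set of size at least $\sqrt{n/(2(k+1))}$. I do not expect a genuine obstacle here, as every step is routine; the only points deserving a little care are checking that the numerical slack $2(k+1)(\omega-1) \leq 2(k+1)\omega-1$ holds (it does for all $k\geq 0$) and emphasising, since this is the whole point of the statement, that the approximation uses no information about the representation.
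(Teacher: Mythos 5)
Your proposal is correct and follows essentially the same route as the paper: the first three items are derived from Lemma~\ref{lem:edges} via heredity, minimum degree, and greedy colouring on the degeneracy ordering (which the paper leaves implicit as ``immediate corollaries''), and the last item is the standard $n \leq \chi(G)\alpha(G) \leq 2(k+1)\omega(G)\alpha(G)$ argument, which is the same computation the paper phrases as a case split on whether a large clique exists. No gaps.
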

\begin{proof}
	The first three points are immediate corollaries of Lemma~\ref{lem:edges}. Let us only prove the last point. If the graph does not admit a clique of size at least $\sqrt{\frac{n}{2(k+1)}}$, then 
	\[\chi(G) \leq 2(k+1) \cdot \sqrt{\frac{n}{2(k+1)}} \leq \sqrt{2(k+1)n}.\]
	Since a proper coloring is a partition into stable sets, there exists a stable set of size at least $\frac{n}{\sqrt{2(k+1)n}}= \sqrt{\frac{n}{2(k+1)}}$ which concludes the proof.
\end{proof}

\bibliographystyle{plain}

\bibliography{biblio}

\end{document}